\newcommand{\Z}{\mathbb{Z}}
\newcommand{\C}{\mathbb{C}}
\newcommand{\R}{\mathbb{R}}
\newcommand{\s}[1]{\{#1\}}
\newcommand{\comp}[1]{\textbf{#1}}
\newcommand{\cxgate}{\textsc{cnot}}
\newcommand{\ccxgate}{\textsc{tof}}
\newcommand{\swapgate}{\textsc{swap}}
\newcommand{\czgate}{\textsc{cz}}
\newcommand{\pgate}{\textsc{s}}
\newcommand{\hgate}{\textsc{h}}
\newcommand{\tgate}{\textsc{t}}
\newcommand{\xgate}{\textsc{x}}
\newcommand{\igate}{\textsc{i}}
\newtheoremstyle{break}
  {}
  {}
  {\itshape}
  {}
  {\bfseries}
  {.}
  {\newline}
  {}
\theoremstyle{plain}
\newtheorem{theorem}{Theorem}[section]
\newtheorem{lemma}[theorem]{Lemma}
\newtheorem{proposition}[theorem]{Proposition}
\newtheorem*{proposition*}{Proposition}
\newtheorem{corollary}[theorem]{Corollary}
\theoremstyle{break}
\theoremstyle{definition}
\newtheorem{definition}[theorem]{Definition}
\newtheorem{example}[theorem]{Example}
\theoremstyle{remark}
\newtheorem{remark}[theorem]{Remark}
\begin{document}

\title{Symbolic Synthesis of Clifford Circuits and Beyond}

\author{Matthew Amy
\institute{School of Computing Science\\ Simon Fraser University, Burnaby, Canada}  
\email{meamy@sfu.ca}
\and
Owen Bennett-Gibbs
\institute{Department of Mathematics and Statistics\\ McGill University, Montreal, Canada}
\email{owen.bennett-gibbs@mail.mcgill.ca}
\and Neil J. Ross
\institute{Department of Mathematics and Statistics \\ Dalhousie University, Halifax, Canada}
\email{neil.jr.ross@dal.ca}
}
\newcommand{\titlerunning}{Symbolic Synthesis of Clifford Circuits and Beyond}
\newcommand{\authorrunning}{M. Amy, O. Bennett-Gibbs \& N.J. Ross}

\maketitle

\begin{abstract}
Path sums are a convenient symbolic formalism for quantum operations with applications to the simulation, optimization, and verification of quantum protocols. Unlike quantum circuits, path sums are not limited to unitary operations, but can express arbitrary linear ones. Two problems, therefore, naturally arise in the study of path sums: the unitarity problem and the extraction problem. The former is the problem of deciding whether a given path sum represents a unitary operator. The latter is the problem of constructing a quantum circuit, given a path sum promised to represent a unitary operator. 

In this paper, we show that the unitarity problem is \textbf{co-NP}-hard in general, but that it is in \textbf{P} when restricted to Clifford path sums. We then provide an algorithm to synthesize a Clifford circuit from a unitary Clifford path sum. The circuits produced by our extraction algorithm are of the form $C_1HC_2$, where $C_1$ and $C_2$ are Hadamard-free circuits and $H$ is a layer of Hadamard gates. We also provide a heuristic generalization of our extraction algorithm to arbitrary path sums. While this algorithm is not guaranteed to succeed, it often succeeds and typically produces natural looking circuits. Alongside applications to the optimization and decompilation of quantum circuits, we demonstrate the capability of our algorithm by synthesizing the standard quantum Fourier transform directly from a path sum.
\end{abstract}

\section{Introduction}
\label{sec:introduction}

The circuit model is ubiquitous in quantum computing, from hardware assembly code to the high-level description of algorithms. Quantum compilation typically amounts to a series of circuit-to-circuit transformations, lowering a circuit, described programmatically in a \emph{circuit description language} over a high-level gate set, down through a series of progressively restrictive gate sets with more fault-tolerance and hardware constraints. Accordingly, quantum algorithms are frequently described at the level of quantum circuits, plugging together inputs and outputs of large, complicated circuits. An exception is the classical oracles used in many quantum algorithms, which are often described at the level of classical programs or Boolean logic, and then \emph{synthesized} as a high-level quantum circuit.

Despite this, the circuit model is often a less than ideal representation of quantum computations. Semantically, circuits expose little information about a computation to the naked eye, particularly when low-level gate sets like Clifford+$T$ are used. Likewise, reasoning about quantum circuits is often a difficult affair involving re-write rules derived from circuit \emph{relations}. Complete sets of relations are only known for a small number of low-level non-universal gate sets \cite{s15, acr17,mrs21}, or higher-level gate sets \cite{bs21,lrs21} which result in a large degree of overhead when compiled. Even with complete sets of relations, simplification of quantum circuits using re-write rules is costly and highly local, yielding results which are typically far from optimal.

Recently, alternative models of quantum computation such as those based on diagrammatic calculi \cite{cd08,bk18}, have risen in popularity. These models have seen success in circuit simplification, among other applications, due in part to more effective re-writing methods. However, as most existing quantum computers ultimately run on circuit-like languages, a key component in using such models for circuit transformations is the ability to synthesize or \emph{extract} a circuit back from the representation. This problem has seen a great deal of attention recently in the context of graphical calculi, resulting in methods for the extraction of Clifford and Clifford+$T$ circuits from ZX diagrams admitting a \emph{generalized flow} \cite{dkpw20,bmflw21}, as well as theoretical results studying the hardness of this extraction problem \cite{bkw22}.

In this paper, we study the problem of synthesizing a unitary circuit given a symbolic expression of a linear operator as a \emph{sum-over-paths} or \emph{path sum} \cite{a18}. As any linear operator between $2^n$-dimensional Hilbert spaces is representable in this form, we first consider the problem of deciding whether a path sum representations a unitary transformation and show that it is generically \textbf{coNP}-hard. Restricting to path sums representing Clifford operators we show that the unitarity problem is in \textbf{P}, and that a unitary circuit can be synthesized in time polynomial in the number of qubits. This extraction algorithm produces circuits of the form $C_1HC_2$, where $C_1$ and $C_2$ are Hadamard-free circuits decomposable as a product of $\pgate$, $\xgate$, $\czgate$, and $\cxgate$ gates, and $H$ is a layer of Hadamard gates. As a consequence we obtain a simple, constructive proof of the $7$-stage Bruhat decomposition of the Clifford group \cite{bm21,mr18}.

For non-Clifford operations, we give a heuristic for the unitary synthesis of general sums. While our heuristic does not always produce a circuit even if the path sum represents a unitary operator, it succeeds often in practice and typically returns efficient, natural circuits. Alongside circuit optimization, this heuristic has applications to the \emph{decompilation} of quantum circuits, whereby a circuit over a low-level gate set such as Clifford+$T$ is re-written over a higher-level gate set such as multiply-controlled Toffoli gates. We further demonstrate the capability of our algorithm by synthesizing the typical quantum Fourier transform circuit directly from its specification as a sum-over-paths.

\section{Path sums}
\label{sec:pathsum}

We begin by briefly reviewing the theory of \emph{path sums} \cite{a18,v21}. A \emph{path sum representation} of a linear operator $\Psi:\C^{2^m}\rightarrow \C^{2^n}$ is an expression for $\Psi$ as a sum indexed by binary variables such as
\begin{equation}
  \label{eq:pathsum1}
	\Psi\ket{\vec{x}} = \mathcal{N}\sum_{\vec{y}\in\Z_2^{k}}e^{2\pi i P(\vec{x}, \vec{y})}\ket{f(\vec{x}, \vec{y})},
\end{equation}
where $\mathcal{N}\in\C\setminus \s{0}$ is a \emph{normalization factor}, and $P:\Z_2^m\times \Z_2^k \rightarrow \R$ and $f:\Z_2^m\times \Z_2^k \rightarrow \Z_2^n$ are real- and Boolean-valued multilinear polynomials, respectively. The path sum in \cref{eq:pathsum1} is said to be \emph{amplitude-balanced} because the normalization factor $\mathcal{N}$ is independent of $\vec{x}$ and $\vec{y}$. We sometimes denote the path sum representation of an operator $\Psi$ by $\ket{\Psi}$.

\cref{eq:pathsum1} provides a representation the operator $\Psi$ in the sense that instantiating the binary variables $\vec{x}$ and $\vec{y}$ on both sides of the equality yields a true equation. For this reason, we think of a path sum as a symbolic description of the action of a linear operator on computational basis states. 

\begin{example}
  \label{ex:cliffordt1}
  The phase gates $\pgate$ and $\tgate$, as well as the Hadamard gate $\hgate$, can be represented by path sums as follows, where $\omega=e^{i\pi/4}$:
  \begin{itemize}
  \item $\pgate\ket{x} = i^x\ket{x}$,
  \item $\tgate\ket{x} = \omega^x\ket{x}$, and
  \item $\hgate\ket{x} = \frac{1}{\sqrt{2}}\sum_{y}(-1)^{xy}\ket{y}$.
  \end{itemize}
\end{example}

As path sums involve arithmetic and polynomials over Boolean variables in various rings, it is useful to recall that Boolean algebra can be embedded in any (unital) ring $\mathcal{R}$ using the \emph{lifting} construction defined in \cite[Lemma 7.1.6]{a18} and reproduced below.
\begin{align*}
	\hspace*{5em} \overline{0} &= 0_\mathcal{R} \hspace{-5em} & \overline{f \land g} &= \overline{f} \cdot \overline{g} \\
	\hspace*{2em} \overline{1} &= 1_\mathcal{R}  \hspace{-5em} & \overline{f \oplus g} &= \overline{f} + \overline{g} - (2 \cdot \overline{f} \cdot \overline{g})
\end{align*}
Lifting allows one to use Boolean expressions of variables inside path sums coherently, leading to more natural expressions, as in the following example.

\begin{example}
  \label{ex:cliffordt2}
  The gates $\cxgate$ and $\ccxgate$ admit the following path sum representations:
  \begin{itemize}
  \item $\cxgate\ket{x_1x_2} = \ket{x_1}\ket{x_1\oplus x_2}$ and
  \item $\ccxgate\ket{x_1x_2x_3} = \ket{x_1}\ket{x_2}\ket{x_1\oplus (x_2\cdot x_3)}$.   
  \end{itemize}
\end{example}

The sum-over-paths representations of linear operators has been studied extensively in the context of quantum information \cite{dhmhno05,bvr08,r09,m17,kps17,bh21}. Recent work on the connection to graphical calculi \cite{lwk20,v21} has shown that path sums form a universal model for linear operators over $2^n$-dimensional Hilbert spaces through direct translations from universal graphical calculi such as the ZH-calculus \cite{bk18}.

\begin{proposition}[Unversality]
  \label{prop:universality}
  Any linear operator $\Psi:\C^{2^m}\rightarrow \C^{2^n}$ admits a representation as a path sum.
\end{proposition}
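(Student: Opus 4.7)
The plan is to construct a path sum directly from the matrix of $\Psi$. Writing $\Psi\ket{\vec{x}} = \sum_{\vec{z}\in\Z_2^n} \Psi_{\vec{z},\vec{x}}\ket{\vec{z}}$, the task is to realize the coefficients $\Psi_{\vec{z},\vec{x}}$ in the amplitude-balanced form $\mathcal{N}\sum_{\vec{y}} e^{2\pi i P(\vec{x},\vec{y})}$ while keeping the output function $f$ and the phase polynomial $P$ multilinear. My ansatz would take $\vec{y} = (\vec{z}, w)$ with $\vec{z}\in\Z_2^n$ indexing the output and $w\in\Z_2$ an auxiliary bit, and set $f(\vec{x},\vec{y}) = \vec{z}$, which is trivially multilinear as a projection.

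The central difficulty is amplitude balance: since the matrix entries $\Psi_{\vec{z},\vec{x}}$ can have wildly different magnitudes, $\mathcal{N}$ cannot simply absorb them. To get around this, I would use the elementary fact that any $\beta\in\C$ with $|\beta|\le 1$ can be written as $\tfrac{1}{2}(e^{i\phi_0}+e^{i\phi_1})$, taking $\phi_0 = \arg\beta + \arccos|\beta|$ and $\phi_1 = \arg\beta - \arccos|\beta|$. Applying this to $\Psi_{\vec{z},\vec{x}}/r$, where $r = \max_{\vec{x},\vec{z}} |\Psi_{\vec{z},\vec{x}}|$, yields real-valued functions $\phi_0(\vec{x},\vec{z})$ and $\phi_1(\vec{x},\vec{z})$ such that $\Psi_{\vec{z},\vec{x}} = \tfrac{r}{2}\bigl(e^{i\phi_0(\vec{x},\vec{z})} + e^{i\phi_1(\vec{x},\vec{z})}\bigr)$. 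The two summands correspond to $w=0$ and $w=1$, and we fix $\mathcal{N}=r/2$.

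It remains to exhibit a single multilinear polynomial encoding both phases. I would set $P(\vec{x},\vec{z},w) = \tfrac{1}{2\pi}\bigl((1-w)\phi_0(\vec{x},\vec{z}) + w\,\phi_1(\vec{x},\vec{z})\bigr)$, which reproduces the desired exponents at $w\in\{0,1\}$. Since any function $g:\Z_2^N\to\R$ agrees on $\Z_2^N$ with the multilinear polynomial $\sum_{\vec{v}\in\Z_2^N} g(\vec{v})\prod_i\bigl(u_i v_i + (1-u_i)(1-v_i)\bigr)$, both $\phi_0$ and $\phi_1$ can be taken to be multilinear in $(\vec{x},\vec{z})$, and the affine combination in $w$ preserves multilinearity. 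Summing over $\vec{y}=(\vec{z},w)$ then reproduces $\Psi\ket{\vec{x}}$ on the nose, so the ansatz is a bona fide path sum in the sense of \cref{eq:pathsum1}.

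I expect the main obstacle is conceptual rather than computational: recognizing that the amplitude-balance requirement is what forces one away from the naive encoding in which $|\Psi_{\vec{z},\vec{x}}|$ sits in the normalization. The two-phase decomposition resolves this with a single extra auxiliary bit and doubles the number of paths; from there, multilinearity of $f$ and $P$ is automatic, and correctness reduces to bookkeeping. One could alternatively invoke the universality of the ZH-calculus together with the translation of \cite{lwk20,v21}, but the direct construction above has the advantage of being self-contained.
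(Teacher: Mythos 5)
Your construction is correct, but it is a genuinely different route from the paper's. The paper does not prove \cref{prop:universality} directly: it inherits universality from the ZH-calculus \cite{bk18} via the translations of \cite{lwk20,v21}, so the generators-and-composition argument is outsourced to the graphical calculus. You instead build the path sum entry-by-entry from the matrix of $\Psi$, and you correctly identify the one real obstacle — amplitude balance — and resolve it with the two-phase trick $\beta = \tfrac{1}{2}(e^{i(\arg\beta + \arccos\lvert\beta\rvert)} + e^{i(\arg\beta - \arccos\lvert\beta\rvert)})$ for $\lvert\beta\rvert\le 1$, at the cost of a single extra path variable; multilinearity of $P$ then follows from Lagrange interpolation over the Boolean cube, and $f=\vec{z}$ is trivially multilinear. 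What each approach buys: yours is self-contained and makes the amplitude-balancing issue explicit, but it produces a phase polynomial with up to $2^{m+n+1}$ monomials and arbitrary real coefficients, so it says nothing about efficiency or about the restricted (e.g.\ Clifford, Clifford+$T$) fragments the paper cares about; the compositional route yields the size bounds of the Efficiency proposition essentially for free and keeps the phase coefficients in whatever ring the gate set generates. Two trivial loose ends in your write-up: the zero operator gives $r=0$, which violates $\mathcal{N}\in\C\setminus\{0\}$ (fix: take any nonzero $\mathcal{N}$ with $\phi_1 = \phi_0 + \pi$ everywhere), and for entries with $\Psi_{\vec{z},\vec{x}}=0$ the quantity $\arg\beta$ is undefined, though any choice works since $\arccos 0 = \pi/2$ makes the two branches cancel.
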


\begin{example}
  \label{ex:compact}
  Path sums can represent linear operators between spaces of different
  dimensions. The operators $\eta:\C \rightarrow \C^{2}\otimes \C^2$
  and $\varepsilon:\C^{2}\otimes \C^2 \rightarrow \C$, which act,
  respectively, as the unit and counit in the category \textbf{FdHilb}
  \cite{v21}, can be written as the following path sums:
  \begin{itemize}
  \item $\eta\ket{} = \sum_y\ket{yy}$ and
  \item $\varepsilon\ket{x_1x_2} = \frac{1}{2}\sum_y(-1)^{y(x_1 + x_2)}\ket{}$.   
  \end{itemize}
  When a path sum represents a row or column vector as above, we drop
  any empty $\ket{}$. A path sum with a single output dimension, 
  representing a $\C$-valued linear map, is said to be
  \emph{dimensionless}.
\end{example}

In contrast to graphical calculi, which have a \emph{compositional} structure, path sums are effectively \emph{global} expressions of a linear operator. In other words, the composition (sequential or parallel) of two linear operators is reified into an expression of the form of \cref{eq:pathsum1}. This is accomplished through the substitution of \emph{free variables} --- variables that are not summed over, corresponding to inputs of the operator as elements of the computational basis. We denote the free variables of a path sum $\ket{\Psi}$ by $FV(\ket{\Psi})$. A path sum with free variables may be thought of as a symbolic state vector in indeterminates $\vec{x}=FV(\ket{\Psi})$. Hence we use the notation $\ket{\Psi(\vec{x})}$ to denote a path sum expression for the operator $\Psi$ with free variables $\vec{x}$. We use $\ket{\Psi(x)}$ to denote a path sum with a distinguished free variable $x$.

Through the lifting operation described above, we can define a notion of substitution for path sums with free variables. In particular, given a free variable $x$ appearing in a path sum we may substitute $x$ with any Boolean expression $f$ in all relevant contexts (the phase or the state).

\begin{definition}[Substitution]
	Let $\ket{\Psi(x)}$ be a path sum with free variable $x$ and let $f$ be a Boolean expression. Then the \emph{substitution} of $x$ with $f$ is denoted $\ket{\Psi(f)}$.
\end{definition}

Reasoning with local binders and free variables in the path sums requires care to avoid variable capture. For instance, let $\ket{\Psi(x)} = \sum_y\ket{x}$. It can be observed that $\ket{\Psi(x)}$ represents the linear operator $\Psi = 2I$. However, if $x$ is substituted with the free variable $y$, the $\sum_y$ captures $y$, giving the path sum $\ket{\Psi(y)} = \sum_y\ket{y}$, representing the vector $\ket{0} + \ket{1}$. We assume that substitution is capture-avoiding unless otherwise noted.

A variable may also be \emph{bound} by summing over its possible values. A bound variable may be locally viewed as a free variable by pulling the summation outside of an expression. Indeed, if $\ket{\Psi(x)}$ is a path sum with free variable $x$, then $\sum_x\ket{\Psi(x)}$ is a path sum with free variables $FV(\ket{\Psi})\setminus\{x\}$. We sometimes refer to bound variables as \emph{path} variables.

\begin{example}
  \label{ex:had2}
  Recall that the Hadamard gate can be represented as $\hgate\ket{x} = \frac{1}{\sqrt{2}}\sum_{y}(-1)^{xy}\ket{y}$. Alternatively, the Hadamard gate can also be written as $\hgate\ket{x} = \sum_y \ket{\Psi(x, y)}$
  where  $\ket{\Psi(x, y)} = \frac{1}{\sqrt{2}}(-1)^{xy}\ket{y}$ is a path sum in the free variables $x$ and $y$.
\end{example}

By \cref{prop:universality}, any linear operator admits a path sum representation. In particular, the composition or tensor product of any two linear operators can also be represented as a path sum. The following proposition gives explicit expression for these constructions in the language of path sums.

\begin{proposition}[Parallel \& Sequential composition]
  Let $\Psi:\C^{2^m} \rightarrow \C^{2^n}$ and $\Phi:\C^{2^{s}} \rightarrow \C^{2^t}$ be two linear operators and let $\Psi\ket{\vec{x}} = \mathcal{N}\sum_{\vec{y}\in\Z_2^{k}}e^{2\pi i P(\vec{x}, \vec{y})}\ket{f(\vec{x}, \vec{y})}$ and $\Phi\ket{\vec{w}} = \mathcal{M}\sum_{\vec{z}\in\Z_2^{l}}e^{2\pi i Q(\vec{w}, \vec{z})}\ket{g(\vec{w}, \vec{z})}$ be expressions of $\Psi$ and $\Phi$ as path sums. Then
\begin{align*}
  	\Psi\otimes\Phi\ket{\vec{x}}\ket{\vec{w}} &= 
 	 \mathcal{NM}\sum_{\vec{y}\in\Z_2^{k}}\sum_{\vec{z}\in\Z_2^{l}}e^{2\pi i \left[P(\vec{x}, \vec{y}) + Q(\vec{w}, \vec{z})\right]}\ket{f(\vec{x}, \vec{y})}\otimes \ket{g(\vec{w}, \vec{z})} \\
	\Phi\circ\Psi\ket{\vec{x}} &= 
  \mathcal{N}\sum_{\vec{y}\in\Z_2^{k}}e^{2\pi i P(\vec{x}, \vec{y})}\ket{\Phi(f(\vec{x}, \vec{y}))}
\end{align*}
	as path sums, where the latter is well-formed if and only if $s=n$.
\end{proposition}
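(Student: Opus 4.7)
The plan is to unfold both constructions directly from the definitions of tensor product and composition of linear maps, verify that the resulting symbolic expression indeed has the path sum shape of \cref{eq:pathsum1}, and flag the mild bookkeeping needed to avoid variable capture.

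For parallel composition, I would start from the definition of the tensor product on computational basis states, namely $(\Psi\otimes\Phi)(\ket{\vec{x}}\otimes\ket{\vec{w}}) = (\Psi\ket{\vec{x}})\otimes(\Phi\ket{\vec{w}})$. Substituting the given path sum expressions for $\Psi\ket{\vec{x}}$ and $\Phi\ket{\vec{w}}$, bilinearity of the tensor product lets me pull the normalization factors out in front and distribute over the two independent sums, while the scalar exponentials combine via $e^{2\pi iP}\cdot e^{2\pi iQ}=e^{2\pi i(P+Q)}$. The only subtlety here is that the path variables $\vec{y}$ and $\vec{z}$ must be disjoint, and disjoint as well from the free variables on both sides; this is a standard capture-avoiding renaming, and after it, the resulting expression is syntactically in the form of \cref{eq:pathsum1} with normalization $\mathcal{N}\mathcal{M}$, phase polynomial $P(\vec{x},\vec{y})+Q(\vec{w},\vec{z})$, and output function $(f(\vec{x},\vec{y}),g(\vec{w},\vec{z}))$.

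For sequential composition, I would first observe that $\Phi\circ\Psi$ is defined exactly when the codomain of $\Psi$ matches the domain of $\Phi$, which at the level of dimensions forces $2^n=2^s$, i.e.\ $s=n$; this handles the well-formedness clause. Assuming $s=n$, I would expand $\Phi\circ\Psi\ket{\vec{x}}=\Phi(\Psi\ket{\vec{x}})$ using the path sum for $\Psi$, and then push $\Phi$ past the finite sum and the scalar $\mathcal{N}e^{2\pi iP(\vec{x},\vec{y})}$ using linearity of $\Phi$. This yields $\mathcal{N}\sum_{\vec{y}}e^{2\pi iP(\vec{x},\vec{y})}\,\Phi\ket{f(\vec{x},\vec{y})}$, which is precisely the right-hand side once we interpret $\ket{\Phi(f(\vec{x},\vec{y}))}$ as the path sum obtained by substituting $\vec{w}\mapsto f(\vec{x},\vec{y})$ in the given representation of $\Phi\ket{\vec{w}}$.

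There is no real obstacle to either direction; the arithmetic is forced by the definitions. The only thing that needs any care, and the place where a careless writeup could go wrong, is variable hygiene: in the sequential case, the substitution of $\vec{w}$ by $f(\vec{x},\vec{y})$ is a capture-avoiding substitution in the sense of the definition of \emph{Substitution} above, which in particular requires that the path variables $\vec{z}$ of $\Phi$ be renamed fresh relative to $\vec{x}$ and $\vec{y}$ before substitution. Once that convention is in place, composing the substituted exponential, the outer exponential, and the two sums yields a single amplitude-balanced path sum with normalization $\mathcal{N}\mathcal{M}$ and path variables $(\vec{y},\vec{z})$, matching the stated form.
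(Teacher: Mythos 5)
Your proof is correct: the paper states this proposition without proof, treating it as a routine consequence of bilinearity of the tensor product, linearity of $\Phi$, and the substitution/lifting machinery, and your unfolding is exactly that intended argument. Your attention to capture-avoiding renaming of the path variables $\vec{y}$, $\vec{z}$ is the one genuinely non-trivial point, and it matches the convention the paper adopts in its discussion of substitution.
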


The parallel and sequential composition of path sums provides a method to compute a symbolic expression for a circuit over a set of basic gates with known path sums. Moreover, for typical gate sets of interest, this representation has size polynomial in the size of the circuit. We give one such result below \cite[Corollary 2.15]{a18} for the class of circuits which will be most relevant for the purposes of this paper. 

\begin{proposition}[Efficiency for Clifford+$T$]
Any circuit over Clifford+$T$ gates of volume $V$ can be expressed as a path sum which has size polynomial in $V$ and can be computed in time polynomial in $V$.
\end{proposition}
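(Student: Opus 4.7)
The plan is to proceed by induction on the circuit's gate count, using the sequential composition rule from the previous proposition. The base case is handled by \cref{ex:cliffordt1,ex:cliffordt2} together with analogous path sums for the remaining Clifford gates: every basic gate admits a path sum of constant size, and only $\hgate$ introduces a new path variable. Specifically, $\pgate$ and $\tgate$ contribute a single monomial to the phase polynomial, $\cxgate$ modifies the state function by an XOR, and $\hgate$ appends a single bilinear phase term and substitutes a fresh variable into the state.

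For the inductive step, suppose a subcircuit of volume $V-1$ on $n$ qubits has already been compiled into a path sum with normalization $\mathcal{N}$, phase polynomial $P$, affine state function $f$, and at most $V-1$ path variables. Appending a new gate $G$ acting on a constant number of qubits invokes sequential composition, which substitutes the relevant components of $f$ for the input variables of $G$'s (constant-size) path sum. Concretely, appending $\pgate$ or $\tgate$ on qubit $i$ adds one monomial in $f_i$ to $P$; appending $\cxgate$ replaces one coordinate of $f$ by the XOR of two existing coordinates; and appending $\hgate$ on qubit $i$ introduces a fresh path variable $y$, appends the bilinear term $f_i \cdot y$ to $P$, and replaces $f_i$ by $y$. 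Each of these operations is polynomial-time and increases the path sum's size by a polynomial amount in $n$.

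After $V$ gates, we therefore obtain at most $V$ path variables, at most $V$ phase monomials of degree at most two, and $n$ affine state components, each of size $O(n+V)$. The overall representation size is then polynomial in $V$ (and $n$), and since each of the $V$ inductive steps runs in polynomial time, the full compilation runs in time polynomial in $V$.

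The main subtlety is controlling the blow-up of the Boolean polynomials under repeated substitution along the sequential composition rule $\Phi \circ \Psi \ket{\vec{x}} = \mathcal{N}\sum_{\vec{y}} e^{2\pi i P(\vec{x},\vec{y})}\ket{\Phi(f(\vec{x},\vec{y}))}$. For an arbitrary gate set this substitution could in principle cause the degree and size of $f$ and $P$ to grow super-polynomially. The saving structural property of Clifford+$T$ is that the state function $f$ remains affine and the phase polynomial remains of degree at most two throughout the compilation: Clifford gates preserve affineness of $f$ by construction, and $\tgate$ does not touch $f$ at all. This invariant is the key point that must be verified carefully to complete the proof.
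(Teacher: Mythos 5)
Your overall strategy---induction on the gate count via sequential composition, with a structural invariant controlling the size of the state function and phase polynomial---is the right skeleton (the paper itself gives no proof here, only a citation to the reference), and your observation that $f$ stays affine is correct and is indeed one of the two key invariants. But your second invariant, that the phase polynomial has degree at most two and gains only one monomial per phase gate, is false for Clifford+$T$, and the failure is exactly where the real work of the proposition lies. When you append a $\tgate$ to qubit $i$, sequential composition substitutes the affine expression $f_i(\vec{x},\vec{y})$ into the phase, contributing $\tfrac{1}{8}\overline{f_i(\vec{x},\vec{y})}$ where $\overline{\,\cdot\,}$ is the lifting. If $f_i = x_1\oplus\dots\oplus x_k$ then $\overline{f_i} = \sum_{\emptyset\neq S\subseteq\{1,\dots,k\}}(-2)^{|S|-1}\prod_{j\in S}x_j$, which has $2^k-1$ monomials and degree $k$; taken literally, your induction therefore does not yield a polynomial size bound at all. (A concrete witness that degree two is not preserved: the doubly-controlled-$Z$, built from $\tgate$ and $\cxgate$ gates, has phase $(-1)^{x_1x_2x_3}$.)

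The missing idea is that the phase only matters modulo $1$ inside $e^{2\pi i P}$: the monomial of $\tfrac{1}{8}\overline{f_i}$ indexed by $S$ has coefficient $(-2)^{|S|-1}/8$, which is an integer whenever $|S|\geq 4$ and can therefore be discarded. What survives is a multilinear polynomial of degree at most $3$, with dyadic coefficients of denominator $8$, $4$, $2$ at degrees $1$, $2$, $3$ respectively, hence at most $O((n+V)^3)$ monomials over the $n+V$ available variables. The same reduction shows that $\pgate$ contributes terms of degree at most $2$, and that the bilinear term introduced by $\hgate$ really is bilinear, since $(-1)^{-2g}=1$ kills the nonlinear part of $\overline{f_i}$ in the exponent of $-1$. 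With this corrected invariant---$f$ affine, $P$ multilinear of degree at most $3$ with coefficients in $\tfrac{1}{8}\Z$ taken modulo $\Z$---your induction goes through and gives the stated polynomial bounds. Without it, the step asserting that each gate increases the path sum's size by a polynomial amount is unjustified.
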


\paragraph{Equational reasoning}
A major utility of the path sum representation \cite{a18} comes from the ability to perform \emph{equational reasoning}. Complete equational theories of Clifford unitaries \cite{a18} and more general stabilizer operations \cite{v21} have previously been developed. We reformulate these theories here using locally free variables to simplify their presentation.

\begin{proposition}\label{prop:eqns}
Let $\Psi$ be a path sum such that $y\notin FV(\Psi)$ and let $f$ be a Boolean expression such that $x,y\notin FV(f)$. Then the following equations hold.
\begin{align}
	\sum_y\ket{\Psi} &= 2\ket{\Psi} \label{eq:e} \\
	\sum_{x,y}(-1)^{y(x + f)}\ket{\Psi(x)} &= 2\ket{\Psi(f)} \label{eq:i} \\
	\sum_yi^y(-1)^{yf}\ket{\Psi} &= \omega\sqrt{2}(-i)^{f}\ket{\Psi} \label{eq:u} \\
	\sum_y\ket{\Psi(y)} &= \sum_y\ket{\Psi(y + f)} \label{eq:v}
\end{align}
\end{proposition}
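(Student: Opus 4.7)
The plan is to verify each of the four identities separately by unfolding the canonical form of $\ket{\Psi}$ from \cref{prop:universality} and reducing to an elementary scalar identity on sums over $\Z_2=\s{0,1}$. In each case, the hypothesis $y\notin FV(\Psi)$ lets me treat the summand of $\ket{\Psi}$ as independent of $y$, so the manipulations affect only the outer scalar prefactor (and, in \eqref{eq:v} and \eqref{eq:i}, the state component via substitution).

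Equation~\eqref{eq:e} is immediate: the summand does not depend on $y$, so summing over $y\in\s{0,1}$ doubles it. For equation~\eqref{eq:v}, the hypothesis $y\notin FV(f)$ ensures that the substitution $y\mapsto y\oplus f$ is capture-avoiding, and since the value of $f$ is fixed under the sum, $y\mapsto y\oplus f$ is a bijection of $\Z_2$; reindexing the sum then yields the claim.

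For equation~\eqref{eq:i} I would commute the $x$ and $y$ summations and evaluate
\[\sum_{y\in\s{0,1}}(-1)^{y(x+f)} \;=\; 1+(-1)^{x+f},\]
which equals $2$ if $x\equiv f\pmod 2$ and $0$ otherwise. Since $x,f\in\s{0,1}$ under lifting, only the term $x=f$ survives in the outer $x$-sum, producing $2\ket{\Psi(f)}$ via capture-avoiding substitution (justified by $x,y\notin FV(f)$). For equation~\eqref{eq:u}, since $y\notin FV(\Psi)$ I can factor $\ket{\Psi}$ out, reducing the claim to the scalar identity
\[\sum_{y\in\s{0,1}} i^{y}(-1)^{yf} \;=\; 1+i(-1)^{f},\]
which I would check against $\omega\sqrt{2}\,(-i)^{f}$ case-by-case on $f\in\s{0,1}$, using $\omega\sqrt{2}=1+i$ to obtain $1+i$ when $f=0$ and $1-i$ when $f=1$.

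The main obstacle is pedantic rather than mathematical: one must carefully distinguish the Boolean expression $f$ from its lifting $\overline{f}$ when it appears in exponents versus in state labels, and verify that all substitutions are capture-avoiding, which is exactly what the hypotheses $y\notin FV(\Psi)$ and $x,y\notin FV(f)$ guarantee. With this bookkeeping in place, each identity reduces to a direct case analysis on the two values of the relevant Boolean variable.
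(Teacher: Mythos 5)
Your proof is correct, but it takes a different route from the paper's. The paper does not verify \cref{eq:e,eq:i,eq:u} directly: it imports them as restatements of Proposition 3.1 of the path-sums paper it builds on, and then spends its effort on \cref{eq:v}, which it establishes \emph{syntactically} by deriving it from \cref{eq:i} inside the equational theory (introduce a fresh pair of variables via \cref{eq:i}, swap the roles of the bound variables, then eliminate via \cref{eq:i} again). You instead verify all four equations \emph{semantically}, by instantiating the Boolean variables and reducing each identity to a two-term scalar computation over $\Z_2$ --- in particular, you prove \cref{eq:v} by the reindexing bijection $y \mapsto y \oplus f$ rather than by derivation from \cref{eq:i}. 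Both arguments establish what the proposition literally claims (that the equations hold as identities of linear operators), and yours has the virtue of being self-contained and elementary, with the only delicate points being exactly the ones you flag: distinguishing $f$ from its lifting $\overline{f}$ in phase versus state positions, and capture-avoidance. What the paper's derivation buys, and your semantic argument does not, is the fact that \cref{eq:v} is \emph{admissible} given \cref{eq:e,eq:i,eq:u} --- i.e.\ it adds no power to the rewrite system --- which is the property the normalization and completeness arguments later in the paper actually rely on. So your proof is a valid substitute for the proposition as stated, but if you wanted to preserve the paper's downstream use of the equational theory you would still want to record the derivation of \cref{eq:v} from \cref{eq:i}.
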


\Cref{eq:e,eq:i,eq:u} are restatements of \cite[Proposition 3.1]{a18}. \Cref{eq:v} is a generalization of the (ket) rule given in \cite[Figure 3]{v21} and can be derived from \cref{eq:i} as follows:
\begin{align*}
	\sum_y\ket{\Psi(y)} &= \sum_y\left[\frac{1}{2}\sum_{x,z}(-1)^{z(x + (y + f))}\ket{\Psi(y)}\right] & \text{By \cref{eq:i} where $x, z\notin FV(f)\cup FV(\Psi)$} \\
		&= \sum_x\left[\frac{1}{2}\sum_{y,z}(-1)^{z(y + (x + f))}\ket{\Psi(y)}\right] & \text{Basic arithmetic} \\
		&= \sum_{x}\ket{\Psi(x + f)} & \text{By \cref{eq:i}} \\
		&= \sum_{y}\ket{\Psi(y + f)} & \text{Since $y\notin FV(f)\cup FV(\Psi)$}
\end{align*}
The first equality above uses the instance $\sum_{x,z}(-1)^{z(x + f')}\ket{\Psi(y)} = 2\ket{\Psi(y)}$ of \cref{eq:i}, where $f'=y+f$ and $\ket{\Psi(y)}$ is viewed as a path sum with zero occurrences of the free variable $x$.

\begin{example}
	Consider the dimensionless path sum $\frac{1}{\sqrt{2}}\sum_y i^y$. By \cref{eq:u} it follows that $\frac{1}{\sqrt{2}}\sum_y i^y = \omega.$ Hence, \Cref{eq:u} symbolically encodes the fact that $\omega = \frac{1 + i}{\sqrt{2}}$.
\end{example}

\paragraph{Relationship to post-selected circuits}
As with the ZX-calculus and variants, path sum expressions correspond naturally to circuits with ancillas and postselection. Given a path sum expression of a linear operator $\Psi:\C^{2^m}\rightarrow\C^{2^n}$ of the form of \cref{eq:pathsum1}, a circuit implementing $\Psi$ up to a constant scalar factor can be achieved through postselection as follows. First, prepare $k$ ancillary qubits in the state $\frac{1}{\sqrt{2}^k}\sum_{\vec{y}\in\Z_2^k}\ket{\vec{y}}$ by applying Hadamard gates to the $\ket{0}^{\otimes k}$ state. The symbolic state is then prepared up to some garbage $\ket{g(\vec{x},\vec{y})}$ via the unitary transformation
\[
	\Psi_{PS}:\ket{\vec{x}}\otimes \ket{\vec{y}} 
		\mapsto e^{2\pi iP(\vec{x},\vec{y})}\ket{f(\vec{x},\vec{y})}\otimes \ket{g(\vec{x},\vec{y})}.
\]
Finally, the garbage is discarded by postselecting $\hgate^{\otimes m+k-n}\ket{g(\vec{x},\vec{y})} = \mathcal{N}'\sum_{\vec{z}}(-1)^{\vec{z}\cdot g(\vec{x},\vec{y})}\ket{\vec{z}}$ on $\vec{z}=\vec{0}$. 

Since postselected quantum circuits are believed to be strictly more powerful than non-postselected circuits \cite{a05}, the rest of this paper focuses on the question of synthesizing unitary circuits for path sums representing unitary transformations, up to normalization.

\section{Unitarity testing}
\label{sec:unitarity}

We are interested in the synthesis of \emph{unitary} quantum circuits implementing a path sum. As a path sum may represent an arbitrary linear operator, a natural question to ask is whether a given path sum represents a unitary transformation, and hence can be extracted to a unitary circuit. We call this the \emph{unitarity testing} problem for path sums and formulate it as a decision problem below.

\begin{definition}[UNITARY]
	\emph{UNITARY} is the set of path-sums $\ket{\Psi}$ where $\Psi$ is a unitary transformation.
\end{definition}

The unitarity problem is clearly decidable since we can always explicitly compute a matrix representation of $\Psi$ from a path sum $\ket{\Psi}$. However, since the size of the corresponding matrix is exponential in $n$, this solution is not efficient. As we show in this section, one should not hope for an efficient solution in general.

Recall that the complexity class \comp{co-NP} consists of the decision problems whose complement belongs to \comp{NP}, and hence is widely believed to be intractable. A canonical complete problem for \comp{co-NP} is the tautology problem, recognizing the set of propositional formulas over the connectives $\s{\neg, \land,\lor}$ which are satisfied by every variable assignment. We view a propositional formula in $n$ distinct free variables as a function $\Z_2^n \rightarrow \Z_2$, using the standard interpretation $\neg x := 1 + x$, $x\land y := xy$ and $x\lor y := x + y - xy$. The application of $\varphi$ to some $\vec{x}\in\Z_2^n$ is denoted by $\varphi(\vec{x})$.

\begin{definition}[Tautology]
A propositional formula $\varphi$ in $n$ variables is a \emph{tautology} if $\varphi(\vec{x})=1$ for every $\vec{x}\in\Z_2^n$, written $\varphi\equiv 1$.
\end{definition}

\begin{definition}[TAUT]
	\emph{TAUT} is the set of all propositional formulas that are tautologies.
\end{definition}

\begin{theorem}[Karp's 21 NP-complete problems]
	The TAUT problem is \textbf{co-NP}-complete.
\end{theorem}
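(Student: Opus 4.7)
The plan is to prove this in the standard two-step way: exhibit a short certificate for non-membership to place \emph{TAUT} in \textbf{co-NP}, and then reduce a known \textbf{NP}-complete problem (\emph{SAT}) to the complement of \emph{TAUT} to obtain \textbf{co-NP}-hardness.

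First, I would show \emph{TAUT} is in \textbf{co-NP} by exhibiting a polynomial-time verifier for its complement. A formula $\varphi$ in $n$ variables fails to be a tautology precisely when there exists an assignment $\vec{x}\in\Z_2^n$ with $\varphi(\vec{x})=0$. Such an $\vec{x}$ has size polynomial in $|\varphi|$, and evaluating a propositional formula on a fixed assignment takes time linear in $|\varphi|$ under the standard semantics for $\s{\neg,\land,\lor}$. So $\vec{x}$ serves as a polynomial-size certificate of non-tautology, verifiable in polynomial time, which places the complement of \emph{TAUT} in \textbf{NP} and hence \emph{TAUT} in \textbf{co-NP}.

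Next, to establish \textbf{co-NP}-hardness, I would reduce \emph{SAT} to the complement of \emph{TAUT}. Given an instance $\varphi$ of \emph{SAT}, form $\psi := \neg\varphi$ in logarithmic space (or even constant time, by prepending a negation). By duality, $\varphi$ is satisfiable if and only if there is some $\vec{x}$ with $\psi(\vec{x})=0$, i.e., if and only if $\psi\not\equiv 1$. Hence $\varphi\in\emph{SAT}$ iff $\psi\notin\emph{TAUT}$, giving a polynomial-time many-one reduction from \emph{SAT} to the complement of \emph{TAUT}. Since \emph{SAT} is \textbf{NP}-complete by the Cook--Levin theorem, the complement of \emph{TAUT} is \textbf{NP}-hard, so \emph{TAUT} itself is \textbf{co-NP}-hard.

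Combining both directions yields \textbf{co-NP}-completeness of \emph{TAUT}. Strictly speaking, the only nontrivial ingredient is Cook--Levin, which I would invoke as a black box rather than reprove; everything else is the elementary de Morgan duality between satisfiability and validity plus a straightforward verifier. There is no genuine obstacle here beyond being careful that the reduction and the verification both run in polynomial time under a reasonable encoding of propositional formulas.
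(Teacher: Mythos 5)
Your proof is correct and is the standard argument: membership in \textbf{co-NP} via a falsifying assignment as certificate, and \textbf{co-NP}-hardness via the trivial reduction $\varphi \mapsto \neg\varphi$ from \emph{SAT} to the complement of \emph{TAUT}, invoking Cook--Levin. The paper states this theorem as a known classical result (citing Karp) and gives no proof of its own, so there is nothing to compare against; your argument is the canonical one and has no gaps.
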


To reduce TAUT to UNITARY, our goal is to encode a propositional formula $\varphi$ as a path sum representing the linear operator $\Phi\ket{\vec{x}} = \varphi(\vec{x})\ket{\vec{x}}$ which is the identity if $\varphi \equiv 1$, and non-unitary otherwise. To do so we establish an encoding of $\varphi$ as a dimensionless path sum of the form $\varphi(\vec{x}) = \mathcal{N}\sum_{\vec{y}}(-1)^{P(\vec{x}, \vec{y})},$
where $P$ is a multilinear Boolean polynomial.

It can readily be observed that $x = 2^{-1}\sum_{y\in\Z_2}(-1)^{y(1 + x)}$ for any $x\in\Z_2$. This gives an immediate encoding of any propositional formula in a path sum by extending the lifting discussed in \cref{sec:pathsum} to propositional negation and disjunction via the equations $\overline{\neg \varphi} = 1 - \overline{\varphi}$, $\overline{\varphi\lor \psi} = \overline{\varphi} + \overline{\psi} - \overline{\varphi}\cdot\overline{\psi},$
and then setting
\[
	\varphi(\vec{x}) = 2^{-1}\sum_{y}(-1)^{y(1 + \overline{\varphi}(\vec{x}))}.
\]
However, the lifting of a propositional formula $\varphi$ may generally have size exponential in the size of $\varphi$. To obtain a polynomial size encoding we rely on the \emph{Tseytin transformation} \cite{t83}.

Given two propositional formulas $\varphi$ and $\psi$, we write $\varphi \leftrightarrow \psi$ for the logical equality of $\varphi$ and $\psi$, which is satisfied by an assignment $\vec{x}$ if and only if $\varphi(\vec{x})=\psi(\vec{x})$. The Tseytin transformation takes a propositional formula $\varphi$ with $k$ distinct subterms and returns an equisatisfiable conjunction of at most $O(k)$ constant-depth formulas by assigning a fresh propositional variable to the value of each subterm. For instance, given a propositional formula $\varphi = x_1 \land (x_2 \lor \neg x_3)$, the Tseytin transformation of $\varphi$, denoted $\mathcal{T}(\varphi)$, is
\begin{align*}
	\mathcal{T}(\varphi) = z_1 \land (z_1 \leftrightarrow x_1 \land z_2) \land (z_2 \leftrightarrow x_2 \lor z_3) \land (z_3 \leftrightarrow \neg x_3).
\end{align*}
Note that $FV(\varphi) \subseteq FV(\mathcal{T}(\varphi))$ and that the satisfying assignments of $\varphi$ and $\mathcal{T}(\varphi)$ are in a 1-to-1 correspondence and agree on $FV(\varphi)$.

Given a propositional formula $\varphi$, we can encode the Tseytin transformation $\mathcal{T}(\varphi)$ of $\varphi$ in a dimensionless sum over the free variables of $\varphi$ using the following encoding of logical equality
\[
	(\varphi\leftrightarrow\psi)(\vec{x}) = \sum_{y}(-1)^{y\overline{\varphi}(\vec{x}) + y\overline{\psi}(\vec{x})}.
\]
Note that for a clause of the Tseytin transformation $z \leftrightarrow \varphi$ where $\varphi$ has constant depth, $\overline{\varphi}$ has constant size. If we denote the clauses of $\mathcal{T}(\varphi)$ by $z_1\leftrightarrow c_1,\dots, z_k\leftrightarrow c_k$, we may encode $\mathcal{T}(\varphi)$ as a polynomial-size sum by taking the product of each clause and distributing over the summations:
\begin{align*}
	\mathcal{T}(\varphi)(\vec{x},\vec{z}) = z_1\prod_i (z_i\leftrightarrow c_i)(\vec{x})  &= 2^{-1}\sum_{y}(-1)^{y(1 + z_1)}\prod_i 2^{-1}\sum_{y_i}(-1)^{y_i(z_i + \overline{c_i}(\vec{x}))} \\  &= 2^{-(k+1)}\sum_y\sum_{\vec{y}\in\Z_2^k} (-1)^{y(1 + z_1) + \sum_i y_i(z_i + \overline{c_i}(\vec{x}))}.
\end{align*}
Finally, since the satisfying assignments of $\varphi$ and $\mathcal{T}(\varphi)$ are in a 1-to-1 correspondence, we see that
\[
	\varphi(\vec{x}) = \sum_{\vec{z}}\mathcal{T}(\varphi)(\vec{x},\vec{z}) = 2^{-(k+1)}\sum_y\sum_{\vec{y}\in\Z_2^k}\sum_{\vec{z}\in\Z_2^k} (-1)^{y(1 + z_1) + \sum_i y_i(z_i + \overline{c_i}(\vec{x}))}.
\]

\begin{proposition}\label{prop:encoding}
Let $\varphi$ be a propositional formula in $n$ variables and let $\mathcal{T}(\varphi) = z_1\land(\bigwedge_{i=1}^k z_i\leftrightarrow c_i)$. Then for any $\vec{x}\in\Z_2^n$,
\begin{align*}
	\varphi(\vec{x}) &= 2^{-(k+1)}\sum_y\sum_{\vec{y}\in\Z_2^k}\sum_{\vec{z}\in\Z_2^k} (-1)^{y(1 + z_1) + \sum_i y_i(z_i + \overline{c_i}(\vec{x}))}
\end{align*}
where the sum on the right hand size has size polynomial in $k$.
\end{proposition}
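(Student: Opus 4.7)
The plan is to work directly from the Tseytin transformation as laid out in the preceding paragraphs, verifying two basic encoding identities and then assembling them. First I would confirm the point--value encoding $x = 2^{-1}\sum_{y\in\Z_2}(-1)^{y(1+x)}$ for $x\in\Z_2$ by a two--case check, and a closely related encoding of equality, namely that $2^{-1}\sum_{y}(-1)^{y(z+\overline{c}(\vec{x}))}$ equals $1$ when $z = \overline{c}(\vec{x})$ and $0$ otherwise; the point is that the integer $z+\overline{c}(\vec{x})$ is even precisely when the two Boolean values agree (so each term contributes $+1$) and is odd precisely when they differ (so the two terms cancel).

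Next I would invoke the defining property of the Tseytin transformation $\mathcal{T}(\varphi) = z_1\wedge \bigwedge_{i=1}^k (z_i\leftrightarrow c_i)$: for every $\vec{x}\in\Z_2^n$ there is a unique assignment $\vec{z}^*(\vec{x})\in\Z_2^k$ making every clause $(z_i\leftrightarrow c_i)(\vec{x})$ true, obtained by evaluating each subterm $c_i$ bottom--up, and at this unique assignment one has $z_1^* = \varphi(\vec{x})$. Viewing each Boolean expression as its $\s{0,1}$--valued indicator, this yields
\[
\varphi(\vec{x}) \;=\; \sum_{\vec{z}\in\Z_2^k}\, z_1\prod_{i=1}^k (z_i\leftrightarrow c_i)(\vec{x}),
\]
which is the starting identity for the sum--over--paths manipulation.

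With these in hand, I would substitute the two encoding identities into each of the $k+1$ factors on the right. This introduces fresh summation variables $y, y_1,\dots,y_k$ and a common prefactor $2^{-(k+1)}$; distributing the product over the summations (legitimate since each sum is finite) collects all exponents into the single phase $y(1+z_1) + \sum_i y_i(z_i + \overline{c_i}(\vec{x}))$, giving exactly the expression stated in the proposition. For the polynomial size claim, I would note that by construction each clause $c_i$ of $\mathcal{T}(\varphi)$ has constant depth, so its lift $\overline{c_i}$ is a multilinear polynomial of constant size; the overall phase is then a sum of $O(k)$ constant--size polynomials over $2k+1$ Boolean variables.

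The main technical delicacy I anticipate is keeping the lifting and integer--arithmetic conventions straight: the exponent in $(-1)^{\,\cdot\,}$ is treated as an integer whose parity determines the sign, even though $z_i$, $\overline{c_i}$, and $y_i$ are Boolean, and one must be careful that the Tseytin equisatisfiability argument is invoked as the stronger \emph{uniqueness} of the consistent $\vec{z}$ given $\vec{x}$, not merely 1--to--1 correspondence of satisfying assignments. These are standard but easy to misstate, and nailing them down rigorously is the only nontrivial point of the proof.
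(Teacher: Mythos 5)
Your proposal is correct and follows essentially the same route as the paper: encode the literal $z_1$ and each clause $z_i\leftrightarrow c_i$ as $2^{-1}\sum_{y}(-1)^{(\cdot)}$, take the product, distribute the summations to get the stated phase, and then sum over $\vec{z}$ using the fact that each $\vec{x}$ determines a unique consistent assignment $\vec{z}^*$ with $z_1^*=\varphi(\vec{x})$, with the polynomial-size claim following from the constant depth of each $c_i$. Your insistence on the \emph{uniqueness} of the consistent extension (rather than mere equisatisfiability) is precisely the point the paper's closing step relies on.
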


\begin{remark}
The encoding of $\varphi$ in \cref{prop:encoding} is interesting because it gives a polynomial-size expression \emph{in the same variables as $\varphi$}. This is in contrast to the propositional Tseytin transformation which gives an encoding over a superset of free variables, and hence only remains equi-satisfiable. In particular, $\mathcal{T}(\cdot)$ does not preserve tautologies, whilst our encoding does when viewed as a $\{0,1\}$-valued function.
\end{remark}

Given the encoding of $\varphi$ above, we can now prove \textbf{co-NP}-hardness of the unitarity testing problem by a reduction from TAUT.

\begin{theorem}
	The unitarity testing problem is \textbf{co-NP}-hard
\end{theorem}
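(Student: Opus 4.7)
The plan is a polynomial-time many-one reduction from TAUT to UNITARY. Given a propositional formula $\varphi$ in $n$ variables, I will construct a path sum of size polynomial in $|\varphi|$ representing the diagonal operator $\Phi\ket{\vec{x}} = \varphi(\vec{x})\ket{\vec{x}}$. Since $\varphi(\vec{x})\in\{0,1\}$, this operator is unitary precisely when every diagonal entry equals $1$, which is the case iff $\varphi\in\mathrm{TAUT}$.

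To build the path sum, I would invoke \cref{prop:encoding} to encode the dimensionless scalar $\varphi(\vec{x})$, then place $\ket{\vec{x}}$ on the right to obtain
\[
\ket{\Phi(\vec{x})} = 2^{-(k+1)}\sum_y\sum_{\vec{y}\in\Z_2^k}\sum_{\vec{z}\in\Z_2^k} (-1)^{y(1 + z_1) + \sum_i y_i(z_i + \overline{c_i}(\vec{x}))}\ket{\vec{x}}.
\]
Because the Tseytin transformation $\mathcal{T}(\varphi)$ introduces at most $O(|\varphi|)$ fresh variables $z_i$ and each lifted clause $\overline{c_i}$ has constant size, this path sum has size polynomial in $|\varphi|$ and is computable in polynomial time. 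By \cref{prop:encoding}, the scalar prefactor evaluates to $\varphi(\vec{x})$, so the path sum really does represent $\Phi$.

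It remains to observe that a diagonal $\{0,1\}$-valued operator is unitary iff its diagonal is identically $1$: if $\varphi(\vec{x}) = 0$ for some $\vec{x}$, then $\Phi$ has a zero column and cannot be unitary, while if $\varphi \equiv 1$ then $\Phi = I$. Hence $\ket{\Phi}\in\mathrm{UNITARY}$ iff $\varphi\in\mathrm{TAUT}$, and the \textbf{co-NP}-completeness of TAUT gives \textbf{co-NP}-hardness of UNITARY. The main obstacle --- producing a polynomial-size path sum encoding of an arbitrary Boolean function in its own free variables --- has already been handled by \cref{prop:encoding}; only the elementary observation about diagonal $\{0,1\}$-valued operators remains to finish the reduction.
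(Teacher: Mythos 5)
Your proposal is correct and follows essentially the same route as the paper: a many-one reduction from TAUT via the diagonal operator $\Psi\ket{\vec{x}} = \varphi(\vec{x})\ket{\vec{x}}$, using \cref{prop:encoding} for the polynomial-size encoding and the observation that a $\{0,1\}$-valued diagonal operator is unitary iff it is the identity. No gaps; this matches the paper's argument.
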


\begin{proof}
	By many-one reduction from TAUT to UNITARY. Given a propositional formula $\varphi$ in $n$ variables, define $\Psi:\Z_2^n\rightarrow \Z_2^n$ to be the linear operator given by $\Psi\ket{\vec{x}} = \varphi(\vec{x})\ket{\vec{x}}.$
	By \cref{prop:encoding}, $\Psi$ admits the following representation as a path sum
	\[
		\Psi\ket{\vec{x}} = 2^{-(k+1)}\sum_y\sum_{\vec{y}\in\Z_2^k}\sum_{\vec{z}\in\Z_2^k} (-1)^{y(1 + z_1) + \sum_i y_i(z_i + \overline{c_i}(\vec{x}))}\ket{\vec{x}}
	\]
	which has size polynomial in the number of subterms of $\varphi$. Hence all that remains is to show that $\Psi$ is unitary 
	if and only if $\varphi$ is a tautology.
	If suffices to observe that, for any $\vec{x}\in\Z_2^n$, $\Psi\ket{\vec{x}} = \varphi(\vec{x})\ket{\vec{x}} = \ket{\vec{x}}$ if $\varphi(\vec{x}) = 1$ and $\vec{x}\in\Z_2^n$, $\Psi\ket{\vec{x}} = \varphi(\vec{x})\ket{\vec{x}} = 0$ otherwise. In particular, if $\varphi(\vec{x}) = 1$ for all $\vec{x}\in\Z_2$, then
	$\Psi= I_n$. Otherwise, there exists $\vec{x}\in\Z_2$
	such that $\Psi\ket{\vec{x}} = 0$ and hence $\Psi$ is non-unitary,
	as required.
\end{proof}

\section{Clifford synthesis}
\label{sec:stabilizer}

In this section we look at the problems of synthesis and unitarity testing in the restricted case of Clifford operations. The synthesis of Clifford circuits has applications both to randomized benchmarking, as well as to the design and analysis of error correction circuits. We first review the definition of the Clifford group.

\begin{definition}[Pauli group]
	The $n$-qubit \emph{Pauli group} $\mathcal{P}_n$ is the group of $n$-fold tensor products of Pauli operators $\{I, X, Y, Z\}$.
\end{definition}

\begin{definition}[Clifford group]
	The $n$-qubit \emph{Clifford group} is the group $\mathcal{C}_n = \{U\in U_{2^n} \mid U\mathcal{P}_nU^\dagger = \mathcal{P}_n\}$.
\end{definition}

A well-known consequences of the Gottesman-Knill theorem is the fact that, up to global phases, the Clifford group is generated by $\{\hgate, \pgate, \cxgate\}$. We may use this fact to give a convenient path sum representation of Clifford operations.

\begin{proposition}
	Every Clifford operator $\Psi:\C^{2^n}\rightarrow \C^{2^n}$ can be written as a sum of the form
   	 \begin{equation}\label{eq:stabilizer}
		\Psi\ket{\vec{x}} = \frac{\omega^l}{\sqrt{2^k}}\sum_{\vec{y}\in\Z_2^k}i^{L(\vec{x}, \vec{y})}(-1)^{Q(\vec{x}, \vec{y})}\ket{f(\vec{x}, \vec{y})}
   	 \end{equation}
	where $\omega=e^{2\pi i/8}$, $l\in\Z_8$, $L:\Z_2^{n}\times\Z_2^{k}\rightarrow \Z_4$ is linear, 
	$Q:\Z_2^{n}\times\Z_2^{k}\rightarrow \Z_2$ is pure quadratic, and 
	$f:\Z_2^{n}\times\Z_2^{k}\rightarrow \Z_2$ is affine.
\end{proposition}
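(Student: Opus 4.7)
The plan is to proceed by induction on the length of a decomposition of $\Psi$ as a product of the generators $\{\hgate, \pgate, \cxgate\}$ together with a global scalar $\omega^l$; such a decomposition exists because these three gates generate $\mathcal{C}_n$ modulo the scalar subgroup $\langle\omega\rangle$. The base case is the identity, which trivially admits the required representation with $k=0$, $l=0$, $L=Q=0$, and $f(\vec{x}) = \vec{x}$. For the inductive step, I assume that $\Psi$ already has such a representation and show that $G\Psi$ does too for each generator $G$, using the sequential composition rule established earlier; this amounts to substituting $\ket{f(\vec{x},\vec{y})}$ into the path sum of $G$ and renormalizing the result back to the canonical shape.

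The case $G = \cxgate_{ij}$ is immediate: the output coordinate $f_j$ is replaced by $f_i \oplus f_j$, which is still affine, and nothing else changes. The case $G = \hgate_i$ introduces a fresh path variable $y'$, multiplies by $\tfrac{1}{\sqrt{2}}(-1)^{y' f_i(\vec{x},\vec{y})}$, and replaces $f_i$ by $y'$ in the ket; writing $f_i = c \oplus L_i(\vec{x},\vec{y})$ with constant part $c$ and linear part $L_i$, the product $y' f_i$ splits mod $2$ into a purely cross-quadratic piece $y' L_i$ (added to $Q$) and a linear-in-$y'$ piece $cy'$, which is absorbed into $L$ via $(-1)^{cy'} = i^{2cy'}$. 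The case $G = \pgate_i$ requires multiplying by $i^{f_i}$, which means lifting the affine Boolean function $f_i$ to $\Z_4$; using $(-1)^{\bigoplus_k v_k} = \prod_k(1-2v_k)$ and expanding, a direct computation gives
\[
\overline{f_i} \equiv c + L_0(\vec{x},\vec{y}) + 2\,Q_0(\vec{x},\vec{y}) \pmod{4},
\]
where $L_0$ is linear and $Q_0$ is pure quadratic, since all monomials of degree $\ge 3$ in the expansion carry a factor of $4$. These three pieces are absorbed respectively into $\omega^l$ (via $i^c = \omega^{2c}$), into $L$, and into $Q$ (via $i^{2Q_0} = (-1)^{Q_0}$). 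Finally, multiplying by a scalar $\omega^m$ just updates $l \mapsto l+m \pmod 8$.

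The main technical obstacle is the $\pgate$ case, where one has to verify the key degree bound: the lift to $\Z_4$ of an XOR of arbitrarily many Boolean variables has degree at most $2$ modulo $4$. This bound, together with the identity $(-1)^x = i^{2x}$ for $x\in\Z_2$ which shuttles linear-mod-$2$ contributions into the $i^L$ exponent and quadratic-mod-$4$ contributions into the $(-1)^Q$ exponent, is what guarantees that $L$ remains linear and $Q$ remains pure quadratic after each composition step, so the normal form of \cref{eq:stabilizer} is preserved throughout the induction.
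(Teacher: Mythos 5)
Your proposal is correct and takes essentially the same route as the paper: verify the form for the generators and show it is preserved under composition, with the key technical point in both being that the $\Z_4$-lift of an affine Boolean expression is linear plus twice a pure quadratic modulo $4$, so the extra terms can be shuttled between $\omega^l$, $i^L$, and $(-1)^Q$. Yours is simply a more explicit, generator-by-generator unpacking of the paper's two-line argument.
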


\begin{proof}
	As the Clifford group generators $\cxgate, \pgate$, and $\hgate$ can be written in the form of \cref{eq:stabilizer}, it only remains to show that the composition of two such sums can be written in the form of \cref{eq:stabilizer}. It suffices to note that substitution of a variable with an affine Boolean expression does not increase the degree of $Q$ or $f$, while substitution in $L$ produces a quadratic form with degree $2$ terms divisible by $2$.
\end{proof}

We call an expression of the form of \cref{eq:stabilizer} a \emph{Clifford} path sum. In the context of \emph{stabilizer states} --- states $\ket{\psi} = C\ket{\vec{0}}$ for some $C\in\mathcal{C}_n$ --- this representation is well-known by various names, including the \emph{quadratic form expansion} \cite{bh21} and the \emph{affine representation} \cite{dm03,n10}. Through the inclusion of free parameters we can represent stabilizer states, Clifford unitaries, or Clifford circuits with ancillas in this form.

We next define a \emph{normal form} for Clifford path sums which will be useful for circuit synthesis.

\begin{definition}[Normal form]
	A Clifford path sum for $\Psi$ is in \emph{normal form} if, up to a reordering of qubits,
   	 \begin{equation}\label{eq:normal}
		\Psi\ket{\vec{x}} = \frac{\omega^l}{\sqrt{2^k}}\sum_{\vec{y}\in\Z_2^k}i^{L(\vec{x}, \vec{y})}(-1)^{Q(\vec{x}, \vec{y})}\ket{\vec{y}}\otimes\ket{f(\vec{x}, \vec{y})},
   	 \end{equation} where $l\in\Z_8$, 
	$L:\Z_2^{n}\times\Z_2^{k}\rightarrow \Z_4$ is linear, 
	$Q:\Z_2^{n}\times\Z_2^{k}\rightarrow \Z_2$ is pure quadratic, and 
	$f:\Z_2^{n}\times\Z_2^{k}\rightarrow \Z_2$ is affine.
\end{definition}

The normal form above corresponds to re-writing the sum over a minimal set of vectors spanning the affine subspace of $\Z_2^n$ given by $\{f(\vec{x},\vec{y}) \mid \vec{y}\in\Z_2^k\}$. The following proposition states that \cref{eq:e,eq:i,eq:u,eq:v} suffice to re-write a unitary Clifford path sum into normal form.

\begin{proposition}\label{prop:normal}
	Let $\ket{\Psi}$ be a Clifford path sum.
	There exists a re-writing procedure which will
	terminate with $\ket{\Psi}$ in normal form
	if $\Psi$ is unitary and runs in time 
	polynomial in the size of $\ket{\Psi}$.
\end{proposition}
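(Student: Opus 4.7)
The plan is to give an iterative rewriting algorithm that, on each pass, either strictly reduces the number of sum variables or exposes one more sum variable as an output coordinate of the ket. The key tools are the four rules of \cref{prop:eqns}: \cref{eq:v} is used to perform linear changes of variables inside the ket, and \cref{eq:e,eq:i,eq:u} are used to eliminate redundant sum variables. Since each of these rules preserves the Clifford path sum form of \cref{eq:stabilizer}, the invariant that $\ket{\Psi}$ remains a Clifford path sum is maintained throughout.

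Writing $f(\vec{x}, \vec{y}) = A\vec{x} + B\vec{y} + c$ over $\F_2$, I would first put $B$ into reduced column-echelon form. Column operations on $B$ correspond to invertible substitutions $\vec{y}\mapsto C\vec{y}$, which are realized by iterated applications of \cref{eq:v} followed by renaming. After also permuting output qubits (permitted by the ``up to a reordering of qubits'' clause in the definition of normal form), the matrix $B$ acquires a leading identity block of size $r = \operatorname{rank}(B)$, so that the first $r$ output coordinates are exactly $y_1,\dots,y_r$ and the remaining $k - r$ sum variables $y_{r+1},\dots,y_k$ no longer appear in the ket. If $r = k$ we are done.

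Otherwise each $y_i$ with $i > r$ occurs only in the phase. Let $d_i\in\Z_4$ be the coefficient of $y_i$ in $L$ and $g_i$ be the coefficient of $y_i$ in $Q$, a $\Z_2$-linear form in the remaining variables, so that the $y_i$-dependence of the phase is $i^{d_i y_i}(-1)^{g_i y_i}$. I would then apply: \cref{eq:u} (or an easily-derived $i^{3y}$ variant, obtained by the identity $i^{3y} = i^{y}(-1)^y$) when $d_i$ is odd; \cref{eq:i} when $d_i$ is even and some sum variable appears in $g_i$, by writing $g_i = y_j + h$ with $y_j \notin FV(h)$ and using the rule to substitute $y_j \mapsto h$ throughout; and \cref{eq:e} when $d_i$ is even and the combined $y_i$-dependence reduces to the constant $1$ (possibly after absorbing $i^{2y_i}=(-1)^{y_i}$ into the quadratic part). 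In every case at least one sum variable is removed and the procedure restarts.

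The main obstacle is to show that for a unitary $\Psi$ one of these rules must always apply to some $y_i$ with $i > r$. If none does, then for each such $y_i$ the coefficient $d_i$ is even, $g_i$ involves no sum variable, and the residual $y_i$-dependence of the phase has the form $(-1)^{y_i \widetilde{g}_i(\vec{x})}$ for some Boolean function $\widetilde{g}_i$ of the free variables that is not identically zero (otherwise \cref{eq:e} would apply). Partially summing over $y_i$ then produces a factor $1 + (-1)^{\widetilde{g}_i(\vec{x})}$ which vanishes on any $\vec{x}$ with $\widetilde{g}_i(\vec{x}) = 1$, forcing $\Psi\ket{\vec{x}} = 0$ and contradicting unitarity. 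Termination in polynomial time follows because each elimination strictly decreases the number of sum variables, the initial column-echelon computation is Gaussian elimination over $\F_2$, and the case analysis at each step can be read off directly from the coefficients of $L$ and $Q$.
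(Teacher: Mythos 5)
Your proposal is correct and follows essentially the same route as the paper: use \cref{eq:v} to turn the path variables occurring in the ket into exact output coordinates, then eliminate the remaining phase-only variables with \cref{eq:e,eq:i,eq:u}, observing that if none applies the partial sum over that variable contributes a factor $1+(-1)^{\widetilde{g}_i(\vec{x})}$ that vanishes for some input, so $\Psi$ is non-unitary. The paper's proof is a two-line sketch that defers to \cite{a18} the claim that one of the elimination rules must apply when $\Psi$ is unitary; your case analysis on the parity of the $y_i$-coefficient of $L$ and on $g_i$ supplies exactly that argument (modulo the small point that after the column reduction of $B$ one further application of \cref{eq:v} is needed to strip the residual affine-in-$\vec{x}$ part from the first $r$ output coordinates so that they are literally $y_1,\dots,y_r$).
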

\begin{proof}
For each path variable $y_i$, if there exists $j$ such that $f_j(\vec{x},\vec{y}) = y_i \oplus f'(\vec{x},\vec{y})$, \cref{eq:v} can be applied to substitute $y_i$ with $y_i\oplus f'(\vec{x},\vec{y})$. If no such $j$ exists, either $\Psi$ is unitary and one of \cref{eq:e,eq:i,eq:u} necessarily applies to eliminate $y_i$  \cite{a18}, or no rule applies and $\Psi$ is non-unitary.
\end{proof}

\begin{remark}
	\Cref{prop:normal} also holds for non-square $\Psi:\C^{2^m}\rightarrow \C^{2^n}$ with $m\leq n$ so long as $\Psi$ is an \emph{isometry} --- that is, if $\Psi$ corresponds to a Clifford circuit with some ancillas or fixed inputs.
\end{remark}

\begin{corollary}
	The unitarity testing problem for Clifford path sums is in \textbf{P}.
\end{corollary}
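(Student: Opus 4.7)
The plan is to extract a polynomial-time decision procedure directly from \cref{prop:normal}. Given a Clifford path sum $\ket{\Psi}$, I apply the rewriting procedure from the proof of \cref{prop:normal}: process each path variable $y_i$ by either substituting it into an output coordinate via \cref{eq:v}, or eliminating it via one of \cref{eq:e,eq:i,eq:u}. The algorithm outputs \emph{unitary} if the procedure successfully terminates with $\ket{\Psi}$ in normal form, and \emph{non-unitary} otherwise (i.e.\ if at some stage there is a path variable to which no rewrite applies).

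Polynomial running time is immediate from \cref{prop:normal}. For correctness, the forward direction (unitary $\Rightarrow$ reaches normal form) is precisely the statement of that proposition. The reverse direction is extracted from its proof, which establishes a dichotomy at every stage: either some rewrite rule applies, or $\Psi$ is non-unitary. Consequently, failure of the procedure certifies non-unitarity, and since each rewrite step preserves the represented operator, successful termination means that $\Psi$ is equal to the resulting normal form.

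The main obstacle is making rigorous the converse direction---ruling out the possibility that a non-unitary Clifford-form path sum happens to reach an expression matching the normal form shape without ever triggering the stuck branch. If the structural conditions in \cref{eq:normal} do not settle this outright (which would make the corollary immediate from \cref{prop:normal}), a clean safety net is to additionally form the Clifford path sum for $\Psi^\dagger \Psi$, which is Clifford and of polynomial size, reduce it by the same procedure, and accept iff it normalizes to the identity. The whole algorithm remains polynomial, so \textbf{UNITARY} restricted to Clifford path sums lies in \textbf{P}.
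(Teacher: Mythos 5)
Your primary route---run the normalization procedure on $\ket{\Psi}$ itself and accept iff it terminates in normal form---has a genuine gap, and the obstacle you flag in your third paragraph is real rather than a formality. The dichotomy in the proof of \cref{prop:normal} is one-directional: if the procedure gets stuck, $\Psi$ is non-unitary; it does \emph{not} say that successful termination certifies unitarity. The remark immediately following \cref{prop:normal} already tells you it cannot, since the same procedure succeeds on non-unitary isometries. Concretely, $\Psi\ket{x_1x_2} = \ket{x_1}\ket{x_1}$ is a Clifford path sum with no path variables at all, so it is vacuously in the normal form of \cref{eq:normal}, yet it is not unitary (it sends both $\ket{00}$ and $\ket{01}$ to $\ket{00}$). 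The direct procedure would wrongly accept it.

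Your ``safety net'' is therefore not a safety net but the entire proof, and it is essentially the paper's argument. The paper constructs path sums for both $\Psi^\dagger\Psi$ and $\Psi\Psi^\dagger$ (obtaining $\Psi^\dagger$ efficiently by negating the linear phase $L$ and using the compact structure maps $\eta$ and $\varepsilon$), normalizes each via \cref{prop:normal}, rejects if either normalization gets stuck, and otherwise accepts iff both normal forms represent the identity, a condition checkable syntactically in polynomial time. You propose checking only $\Psi^\dagger\Psi = I$; for a square path sum (i.e.\ $m=n$, which is syntactically checkable and must be rejected otherwise) this is equivalent, since a square isometry over $\C$ is automatically unitary, so the one-sided check suffices. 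Promote the fallback to the main argument, discard the first algorithm, and your proof coincides with the paper's.
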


\begin{proof}
	Given a Clifford path sum $\ket{\Psi}$, we can construct a path sum representation of $\Psi^\dagger$ 
	efficiently using $\eta$, $\epsilon$, and negating $L$. Then by \cref{prop:normal} we can normalize the path sum representations 
	of $\Psi\Psi^\dagger$ and $\Psi^\dagger\Psi$ each in polynomial time. If either fails to produce a normal form,
	then one of $\Psi\Psi^\dagger$ or $\Psi^\dagger\Psi$ is non-unitary and hence $\Psi$ is non-unitary.
	If both are reduced to normal form, it suffices to observe that we can check whether
	a Clifford path sum in normal form represents the identity 
	transformation in polynomial time.
\end{proof}

It is now straightforward to compute a circuit implementing a (unitary) Clifford path sum from its normalized form. If we decompose $f$, $L$, and $Q$ as $f(\vec{x},\vec{y}) = f_x(\vec{x}) + f_y(\vec{y}) + \vec{b}$, $L(\vec{x}, \vec{y}) = L_x(\vec{x}) + L_y(\vec{y})$, and $Q(\vec{x}, \vec{y}) = Q_x(\vec{x}) + Q_y(\vec{y}) + \sum_{i=1}^ky_iR_i(\vec{x})$ then the normal form can be written as the following sequence of transformations:
\begin{align*}
	\ket{\vec{x}} &\mapsto \omega^li^{L_x(\vec{x})}(-1)^{Q_x(\vec{x})}\ket{\vec{x}} \\ 
	\ket{\vec{x}} &\mapsto \ket{R(\vec{x})} \ket{f_x(\vec{x})} \\
	\ket{R(\vec{x})}\ket{f_x(\vec{x})} &\mapsto \frac{1}{\sqrt{2^k}}\sum_{\vec{y}\in\Z_2^k} (-1)^{\sum_i y_iR_i(\vec{x})}\ket{\vec{y}}\ket{f_x(\vec{x})} \\
	\ket{\vec{y}} \ket{f_x(\vec{x})} &\mapsto \ket{\vec{y}} \ket{f_x(\vec{x}) + f_y(\vec{y}) + \vec{b}} \\
	\ket{\vec{y}}\ket{f(\vec{x},\vec{y})} &\mapsto i^{L_y(\vec{y})}(-1)^{Q_y(\vec{y})}\ket{\vec{y}}\ket{f(\vec{x},\vec{y})}
\end{align*}
This gives a circuit of the form $U(\hgate^{\otimes k}\otimes \igate_{n-k})V$ where $U$ and $V$ are generalized permutations contained in the Clifford group. Moreover, $\ket{\vec{x}} \mapsto \ket{R(\vec{x})} \ket{f_x(\vec{x})}$ is the only operator which may be non-unitary, and in particular is unitary if and only if $\Psi$ is. Note that the unitarity of $\Psi$ hence forces $\{R_i\}$ to be linearly independent and for $R_i$ to be non-zero. This is summarized in \cref{alg:clifford}.

\begin{algorithm}[h]
\caption{Clifford synthesis algorithm}
\label{alg:clifford}
\begin{enumerate}
	\itemsep0em
	\item Normalize $\ket{\Psi}$ in the form $\frac{\omega^l}{\sqrt{2^k}}\sum_{\vec{y}\in\Z_2^k}
		i^{L(\vec{x}, \vec{y})}(-1)^{Q(\vec{x}, \vec{y})}\ket{\vec{y}}\otimes\ket{f(\vec{x}, \vec{y})}$ up to qubit reordering.
	\item Decompose $f$, $L$, and $Q$ as $f(\vec{x},\vec{y}) = f_x(\vec{x}) + f_y(\vec{y}) + \vec{b}$, $L(\vec{x}, \vec{y}) = L_x(\vec{x}) + L_y(\vec{y})$ and $Q(\vec{x}, \vec{y}) = Q_x(\vec{x}) + Q_y(\vec{y}) + \sum_{i}y_iR_i(\vec{x})$ where each $R_i$ is linear.
	\item Synthesize circuits for the following linear transformations:
	\begin{itemize}
		\item $D\ket{\vec{x}} = i^{L_x(\vec{x})}(-1)^{Q_x(\vec{x})}\ket{\vec{x}}$
		\item $U \ket{\vec{x}} = \ket{R(\vec{x})}\ket{f_x(\vec{x})}$
		\item $V \ket{\vec{y}}\ket{f_x(\vec{x})} = \ket{\vec{y}}\ket{f_x(\vec{x}) + f_y(\vec{y}) + \vec{b}}$
		\item $P\ket{\vec{y}}\ket{f(\vec{x},\vec{y})} = i^{L_y(\vec{y})}(-1)^{Q_y(\vec{y})}\ket{\vec{y}}\ket{f(\vec{x},\vec{y})}$
	\end{itemize}
	\item Return $\omega^l PV(\hgate^{\otimes k}\otimes \igate_{n-k})UD$ with qubits appropriately reordered.
\end{enumerate}
\end{algorithm}

\begin{theorem}\label{thm:clifford}
	Let $\Psi:\C^{2^n}\rightarrow \C^{2^n}$ be expressed as a Clifford path sum. 
	If $\Psi$ is unitary, then \cref{alg:clifford} produces a circuit over $\{\omega, \cxgate, \xgate, \czgate, \pgate, \hgate\}$ implementing $\Psi$
	in time polynomial in the size of the expression. Moreover, this circuit can be written up to global phase as an $8$-stage circuit of the form
	\[
		\pgate\cdot\czgate\cdot\cxgate\cdot\hgate
		\cdot\cxgate\cdot\xgate\cdot\czgate\cdot\pgate
	\]
\end{theorem}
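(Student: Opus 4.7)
The plan is to track how each sub-operator contributes to reproducing the normal form, and then to decompose each sub-operator using standard Clifford synthesis. First, I invoke \cref{prop:normal} to put $\ket{\Psi}$ in the form of \cref{eq:normal} in time polynomial in the size of $\ket{\Psi}$, then extract the data required by Step~2 of \cref{alg:clifford}: the decompositions $f(\vec x, \vec y) = f_x(\vec x) + f_y(\vec y) + \vec b$, $L = L_x + L_y$, and $Q = Q_x + Q_y + \sum_i y_i R_i(\vec x)$. I then verify correctness of $\omega^l \, P \, V \, (\hgate^{\otimes k} \otimes \igate_{n-k}) \, U \, D$ by applying it to $\ket{\vec x}$ and collecting phases layer by layer: $D$ contributes $i^{L_x(\vec x)}(-1)^{Q_x(\vec x)}$; $U$ sends the ket to $\ket{R(\vec x)}\ket{f_x(\vec x)}$; $\hgate^{\otimes k}$ on the first register produces $\tfrac{1}{\sqrt{2^k}}\sum_{\vec y}(-1)^{\vec y \cdot R(\vec x)}\ket{\vec y}\ket{f_x(\vec x)}$, generating the cross term $\sum_i y_i R_i(\vec x)$; $V$ shifts the second register to $f(\vec x, \vec y)$; and $P$ applies the residual phase $i^{L_y(\vec y)}(-1)^{Q_y(\vec y)}$. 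Together with the overall $\omega^l$ this exactly reproduces \cref{eq:normal}.

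Next I synthesize each block from $\{\cxgate, \xgate, \czgate, \pgate, \hgate\}$. The diagonal operators $D$ and $P$ have phases built from a linear form (one $\pgate^a$ per variable) and a pure quadratic form (one $\czgate$ per quadratic monomial); since these factors commute, each decomposes as a $\pgate$-layer composed with a $\czgate$-layer. The operator $U:\ket{\vec x}\mapsto\ket{R(\vec x)}\ket{f_x(\vec x)}$ is a linear reversible map on $\Z_2^n$, invertible because the unitarity of $\Psi$ forces the combined map $\vec x\mapsto (R(\vec x), f_x(\vec x))$ to be a bijection (otherwise distinct inputs would collide and $\Psi$'s columns would fail to be orthonormal); hence $U$ is implementable by $O(n^2)$ $\cxgate$ gates via Gaussian elimination over $\F_2$. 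The operator $V$ is an affine permutation that XORs $f_y(\vec y) + \vec b$ into the second register, so it factors into a $\cxgate$-layer (for $f_y$) between the two registers and an $\xgate$-layer (for $\vec b$) on the second register. Each block uses at most $O(n^2)$ gates, giving overall polynomial time.

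Concatenating yields a circuit of the shape $\omega^l \cdot (\czgate\cdot\pgate) \cdot (\xgate\cdot\cxgate) \cdot \hgate \cdot \cxgate \cdot (\czgate\cdot\pgate)$, which is already 8 stages over the promised gate set. To reach the precise ordering $\pgate\cdot\czgate\cdot\cxgate\cdot\hgate\cdot\cxgate\cdot\xgate\cdot\czgate\cdot\pgate$ claimed, I invoke three commutations: $\pgate$ commutes with $\czgate$ (both diagonal) to reorder the outer pairs; $\xgate$ on the target of a $\cxgate$ commutes with that $\cxgate$; and gates on disjoint qubit sets commute, which lets the $\xgate$-layer (supported on the last $n-k$ qubits, where $\vec b$ lives) slide past the $\hgate^{\otimes k}$ (supported on the first $k$ qubits) into the desired slot. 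The main obstacle is bookkeeping for this last rearrangement: ensuring the $\xgate$-layer's support is indeed disjoint from the Hadamard register and that the $\cxgate$ block from $V$ targets only the second register, both of which are guaranteed by the qubit reordering implicit in \cref{eq:normal}. Everything else is straightforward verification.
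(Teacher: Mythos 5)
Your proposal is correct and follows essentially the same route as the paper: normalize via \cref{prop:normal}, verify $\Psi=\omega^l PV(\hgate^{\otimes k}\otimes \igate_{n-k})UD$ by a layer-by-layer calculation, synthesize $D$ and $P$ as $\pgate$/$\czgate$ stages, $V$ as $\cxgate$/$\xgate$ stages, and $U$ by Gaussian elimination over $\F_2$; your column-collision argument for the invertibility of $U$ is an equivalent rephrasing of the paper's one-line identity $U=\omega^{-l}(\hgate^{\otimes k}\otimes\igate_{n-k})V^\dagger P^\dagger\Psi D^\dagger$. One small caution on your final paragraph: the stage list in the theorem is read in circuit (time) order, so the $\xgate$ stage already sits inside $V$, after the Hadamard layer, and the only commutations needed are the trivial ones (diagonal gates within $D$ and $P$, and $\xgate$ with $\cxgate$ sharing a target within $V$); your proposed move of the $\xgate$ layer across $\hgate^{\otimes k}$ to the input side is both unnecessary and, as stated, insufficient, since it would also have to pass through $U$'s $\cxgate$ stage, which does not commute with it in general.
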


\begin{proof}
That $\Psi = \omega^l PV(\hgate^{\otimes k}\otimes \igate_{n-k})UD$ follows by an easy calculation.

By \cref{prop:normal}, $\Psi$ can be written up to a permutation of qubits in normal form in polynomial time.
Since $L_x$ and $L_y$ are linear, and $Q_x$ and $Q_y$ are pure quadratic, $D$ and $P$ can each be synthesized
using a single stage each of $\pgate$ and $\czgate$ gates --- one $\pgate^m$ gate for each non-zero term of $L_{\{x,y\}}$
and one $\czgate$ gate for each non-zero term of $Q_{\{x,y\}}$. 
Likewise, since $f_y(\vec{y})$ is linear, $V$ can be synthesized in time polynomial in $n$ using a single stage each 
of $\cxgate$ and $\xgate$ gates --- one gate for each non-zero entry of $f_y(\vec{y})$ and $\vec{b}$. 
Finally, $U\ket{\vec{x}} = \ket{R(\vec{x})}\otimes \ket{f_x(\vec{x})}$ can be synthesized over $\{\cxgate\}$ in
polynomial time using Gaussian elimination if and only if $U$ is invertible. 
Moreover, since 
\[
	U=\omega^{-l}(\hgate^{\otimes k}\otimes  \igate_{n-k})V^\dagger P^\dagger\Psi D^\dagger,
\] 
it follows that $U$ is invertible if and only if $\Psi$ is unitary.
\end{proof}

We give a diagrammatic presentation of \cref{thm:clifford} showing the circuit schematically below.

{\small
\[\hspace*{-3em}
	\Qcircuit @C=.5em @R=.7em {
		\lstick{x_1} & \qw & \gate{S^{L_{x}(x_1)}} & \multigate{5}{(-1)^{Q_x(\vec{x})}} & \multigate{5}{U} & \gate{H}& \ctrl{3} & \qw & \qw & \multigate{2}{(-1)^{Q_y(\vec{y})}} & \gate{S^{L_{y}(y_1)}} & \qw & \rstick{y_1} \qw \\
		& \vdots & & & & & \rstick{\;\;\;\;\ddots} & & & & & \vdots \\
		\lstick{x_k} & \qw & \gate{S^{L_{x}(x_k)}} & \ghost{(-1)^{Q_x(\vec{x})}} & \ghost{U} & \gate{H} & \qw & \ctrl{1} & \qw & \ghost{(-1)^{Q_y(\vec{y})}} & \gate{S^{L_{y}(y_k)}} & \qw & \rstick{y_k} \qw \\
		\lstick{x_{k+1}} & \qw & \gate{S^{L_{x}(x_{k+1})}} & \ghost{(-1)^{Q_x(\vec{x})}} & \ghost{U} & \qw & \multigate{2}{X^{f_y(y_1)}} & \multigate{2}{X^{f_y(y_k)}} & \gate{X^{b_1}} & \qw & \qw & \qw & \rstick{f_1(\vec{x},\vec{y})} \qw \\
		& \vdots & & & & & & & & & & \vdots \\
		\lstick{x_n} & \qw & \gate{S^{L_{x}(x_n)}} & \ghost{(-1)^{Q_x(\vec{x})}} & \ghost{U} & \qw & \ghost{X^{f_y(y_1)}} & \ghost{X^{f_y(y_k)}} & \gate{X^{b_{n-k}}} & \qw & \qw & \qw & \rstick{f_{n-k}(\vec{x},\vec{y})} \qw
	}
\]
}

\paragraph{Discussion}
In \cite{mr18} a $7$ stage decomposition of the Clifford group of the form 
$\pgate\cdot\czgate\cdot\textsc{c}\cdot\hgate\cdot\textsc{c}\cdot\czgate\cdot\pgate$
was given, where $\textsc{c}$ is circuit implementing an affine permutation. 
As affine permutations require both $\cxgate$ and $\xgate$ gates to implement without 
ancillas --- and moreover $\xgate$ can not be written in the form 
$\pgate\cdot\czgate\cdot\cxgate\cdot\hgate\cdot\cxgate\cdot\czgate\cdot\pgate$ --- our 
projective decomposition reduces the equivalent
$9$-stage projective decomposition of \cite{mr18} to $8$ stages.

It can also be observed that with a minor modification, \cref{alg:clifford} suffices to synthesize Clifford circuits with ancillas, including circuits for preparing stabilizer states. In particular, if $\Psi:\C^{2^m}\rightarrow \C^{2^n}$ is an isometric Clifford path sum with $m \leq n$, the only modification needed is in the synthesis of $U\ket{\vec{x}}= \ket{R(\vec{x})}\otimes \ket{f_x(\vec{x})}$. If indeed $m\leq n$, then a Clifford circuit with ancillas exists and can be synthesized if and only if $\{R_i\}\cup\{(f_x)_i\}$ contains $m$ linearly independent (row) vectors. This produces a $5$-stage circuit of the form $\hgate\cdot\cxgate\cdot \xgate\cdot\czgate\cdot\pgate$ for the preparation of an arbitrary stabilizer state up to global phase. This stabilizer state decomposition was previously given in \cite{n10}.

\begin{corollary}
	A Clifford normal form
	$
		\ket{\vec{x}} \mapsto \frac{\omega^l}{\sqrt{2^k}}\sum\nolimits_{\vec{y}\in\Z_2^k}i^{L(\vec{x}, \vec{y})}(-1)^{Q(\vec{x}, \vec{y})}\ket{\vec{y}}\otimes\ket{f(\vec{x}, \vec{y})}
	$
	 from $m$ to $n\geq m$ qubits
	can be implemented with Clifford gates and ancillas initialized in the $\ket{0}$ state \emph{if and only if}
	\[
		\text{rank}(\{R_i\} \cup \{(f_x)_i\}) = m.
	\]
	The circuit is shown schematically below:

{\small
\[\hspace*{-3em}
	\Qcircuit @C=.5em @R=.4em {
		\lstick{x_1} & \qw & \gate{S^{L_{x}(x_1)}} & \multigate{5}{(-1)^{Q_x(\vec{x})}} & \multigate{8}{U} & \gate{H}& \ctrl{3} & \qw & \qw & \multigate{2}{(-1)^{Q_y(\vec{y})}} & \gate{S^{L_{y}(y_1)}} & \qw & \qw & \rstick{y_1} \qw \\
		& \vdots & & & & & \rstick{\;\;\;\;\ddots} & & & & & & \vdots \\
		\lstick{x_k} & \qw & \gate{S^{L_{x}(x_k)}} & \ghost{(-1)^{Q_x(\vec{x})}} & \ghost{U} & \gate{H} & \qw & \ctrl{1} & \qw & \ghost{(-1)^{Q_y(\vec{y})}} & \gate{S^{L_{y}(y_k)}} & \qw & \qw & \rstick{y_k} \qw \\
		\lstick{x_{k+1}} & \qw & \gate{S^{L_{x}(x_{k+1})}} & \ghost{(-1)^{Q_x(\vec{x})}} & \ghost{U} & \qw & \multigate{5}{X^{f_y(y_1)}} & \multigate{5}{X^{f_y(y_k)}} & \gate{X^{b_1}} & \qw & \qw & \qw & \qw & \rstick{f_1(\vec{x},\vec{y})} \qw \\
		& \vdots & & & & & & & & & & & \vdots \\
		\lstick{x_m} & \qw & \gate{S^{L_{x}(x_m)}} & \ghost{(-1)^{Q_x(\vec{x})}} & \ghost{U} & \qw & \ghost{X^{f_y(y_1)}} & \ghost{X^{f_y(y_k)}} & \gate{X^{b_{m-k}}} & \qw & \qw & \qw & \qw & \rstick{f_{m-k}(\vec{x},\vec{y})} \qw \\
		& & & \lstick{0} & \ghost{U} & \qw & \ghost{X^{f_y(y_1)}} & \ghost{X^{f_y(y_k)}} & \gate{X^{b_{m-k+1}}} & \qw & \qw & \qw & \qw & \rstick{f_{m-k+1}(\vec{x},\vec{y})} \qw \\
		& & & \hspace{1em}\vdots & & & & & & & & & \vdots \\
		& & & \lstick{0} & \ghost{U} & \qw & \ghost{X^{f_y(y_1)}} & \ghost{X^{f_y(y_k)}} & \gate{X^{b_{n-k}}} & \qw & \qw & \qw & \qw & \rstick{f_{n-k}(\vec{x},\vec{y})} \qw
	}
\]
}
\end{corollary}

\section{General synthesis}
\label{sec:general}

We now consider the more challenging problem of synthesizing a unitary circuit from an arbitrary path sum. Our method attempts to iteratively reduce the number of summed variables in a path sum by alternately applying generalized permutations and Hadamard gates to the symbolic state. Recall that a \emph{(unitary) generalized permutation} is a permutation matrix whose nonzero entries are elements of $\mathbb{T} = \{z\in \C \mid |z| = 1\}$. The generalized permutations are generated by the gates
\[
\Lambda_k(X):\ket{\vec{x}}\ket{y} \mapsto \ket{\vec{x}}\ket{y \oplus \prod_i x_i}
\quad \mbox{ and } \quad
\Lambda_k(R_Z(\theta)):\ket{\vec{x}}\mapsto e^{2\pi i \theta\prod_i x_i}\ket{\vec{x}}.
\]
Together with the Hadamard gate this forms an exactly universal set as it includes every single-qubit unitary along with the $\cxgate$ gate.

The following fact forms the basis of our synthesis algorithm. It gives a condition on a path sum which allows a summed variable to be eliminated by multiplication with a Hadamard gate.
\begin{proposition}\label{prop:hadamard}
	Let $\Psi:\C^{2^m}\rightarrow \C^{2^n}$ be a linear operator where
	\[
		\hspace*{6em}\Psi\ket{\vec{x}} = \mathcal{N}\sum_{z\in\Z_2}\sum_{\vec{y}\in\Z_2^{k}}(-1)^{zQ(\vec{x}, \vec{y})}e^{2\pi i P(\vec{x}, \vec{y})}\ket{z}\otimes \ket{f(\vec{x}, \vec{y})}.
	\]
	Then $(\hgate\otimes \igate_{n-1})\Psi\ket{\vec{x}} = \sqrt{2}\mathcal{N}\sum_{\vec{y}\in\Z_2^{k}}e^{2\pi i P(\vec{x}, \vec{y})}\ket{Q(\vec{x}, \vec{y})}\otimes \ket{f(\vec{x}, \vec{y})}.$

\end{proposition}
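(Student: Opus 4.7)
The plan is to compute $(\hgate\otimes\igate_{n-1})\Psi\ket{\vec{x}}$ by directly substituting the path sum for $\hgate$ from \cref{ex:cliffordt1} into the given expression for $\Psi\ket{\vec{x}}$ and then simplifying using the equational theory of \cref{prop:eqns}. This is essentially a bookkeeping exercise, with no substantive obstacle; the only care required is to rename path variables so that \cref{eq:i} applies cleanly.

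First I would replace $\ket{z}$ in the expression for $\Psi\ket{\vec{x}}$ by its Hadamard image $\hgate\ket{z} = \tfrac{1}{\sqrt{2}}\sum_{w\in\Z_2}(-1)^{zw}\ket{w}$. After pulling the prefactor $\tfrac{1}{\sqrt{2}}$ and the new sum over $w$ outside, the resulting path sum reads
\[
  (\hgate\otimes\igate_{n-1})\Psi\ket{\vec{x}}
  = \frac{\mathcal{N}}{\sqrt{2}}\sum_{w\in\Z_2}\sum_{z\in\Z_2}\sum_{\vec{y}\in\Z_2^{k}}
    (-1)^{z(Q(\vec{x},\vec{y}) + w)}\, e^{2\pi i P(\vec{x},\vec{y})}\,
    \ket{w}\otimes\ket{f(\vec{x},\vec{y})}.
\]
Note that the phase $e^{2\pi i P(\vec{x},\vec{y})}$ and the ket $\ket{f(\vec{x},\vec{y})}$ are independent of $z$ and $w$, so the path sum has exactly the shape required to apply \cref{eq:i} to the pair $(w,z)$, with the role of the ``outer'' free variable played by $w$ and the Boolean expression $f$ replaced by $Q(\vec{x},\vec{y})$.

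Applying \cref{eq:i} collapses the double sum $\sum_{w,z}(-1)^{z(w+Q(\vec{x},\vec{y}))}$ to the factor $2$ together with the substitution $w \mapsto Q(\vec{x},\vec{y})$, yielding
\[
  (\hgate\otimes\igate_{n-1})\Psi\ket{\vec{x}}
  = \frac{2\mathcal{N}}{\sqrt{2}}\sum_{\vec{y}\in\Z_2^{k}}
    e^{2\pi i P(\vec{x},\vec{y})}\,\ket{Q(\vec{x},\vec{y})}\otimes\ket{f(\vec{x},\vec{y})}
  = \sqrt{2}\mathcal{N}\sum_{\vec{y}\in\Z_2^{k}}
    e^{2\pi i P(\vec{x},\vec{y})}\,\ket{Q(\vec{x},\vec{y})}\otimes\ket{f(\vec{x},\vec{y})},
\]
which is exactly the claimed expression. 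Equivalently, one may avoid invoking \cref{eq:i} and instead sum over $z\in\Z_2$ directly: the inner sum $\sum_{z\in\Z_2}(-1)^{z(Q(\vec{x},\vec{y})+w)}$ equals $2$ when $w \equiv Q(\vec{x},\vec{y}) \pmod 2$ and $0$ otherwise, producing the same result.

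Since the manipulations are entirely symbolic and use only the defining action of $\hgate$ and a single application of \cref{eq:i}, there is no real hard step. The one place where mild caution is warranted is ensuring that the fresh variable $w$ introduced by the Hadamard expansion is distinct from all variables already appearing in $P$, $Q$, and $f$, so that no variable capture occurs; this is guaranteed because $\hgate$ acts on a newly prepared output register and its path variable is chosen fresh.
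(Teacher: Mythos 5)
Your proof is correct and follows essentially the same route as the paper: expand $\hgate\ket{z}$ as $\tfrac{1}{\sqrt{2}}\sum_{w}(-1)^{zw}\ket{w}$ and then collapse the double sum with a single application of \cref{eq:i}, substituting $Q(\vec{x},\vec{y})$ for the fresh variable. The paper's one-line proof does exactly this (with $z'$ in place of your $w$), so nothing further is needed.
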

\begin{proof}
	By \cref{eq:i}, since $(\hgate \otimes \igate_{n-1})\Psi\ket{\vec{x}} = 
		\frac{1}{\sqrt{2}}\mathcal{N}\sum_{z}\sum_{\vec{y}}
		(-1)^{zQ(\vec{x}, \vec{y}) + zz'}e^{2\pi i P(\vec{x}, \vec{y})}\ket{z'}\otimes \ket{f(\vec{x}, \vec{y})}$
\end{proof}

Note that \cref{prop:hadamard} is essentially an inversion of the $\hgate$ gate, $\hgate^\dagger:\sum_z(-1)^{xz}\ket{z} \mapsto \ket{x}$.
We say that a variable $z$ is \emph{reducible} in the path sum $\ket{\Psi}$ if $\ket{\Psi}$ is in the form of \cref{prop:hadamard}.

\begin{definition}[Reducible]
	A variable $z$ is \emph{reducible} in an expression of $\Psi:\C^{2^m}\rightarrow \C^{2^n}$ if, up to qubit reordering, it has the form
	\[
		\hspace*{6em}\Psi\ket{\vec{x}} = \mathcal{N}\sum_{z\in\Z_2}\sum_{\vec{y}\in\Z_2^{k}}(-1)^{zQ(\vec{x}, \vec{y})}e^{2\pi i P(\vec{x}, \vec{y})}\ket{z}\otimes \ket{f(\vec{x}, \vec{y})}.
	\]
\end{definition}

At a high level, our algorithm proceeds by attempting to synthesize a generalized permutation which will leave some path variable reducible. If the process terminates with remaining summed variables, or an unsynthesizeable ground term $e^{2\pi i P(\vec{x})}\ket{f(\vec{x})}$, the algorithm fails to produce a circuit. \Cref{alg:unitary} gives the high-level algorithm in pseudo-code.

\begin{algorithm}[h]
\caption{General path sum synthesis algorithm}
\label{alg:unitary}
\begin{enumerate}
	\itemsep0em
	\item Set $C$ to the empty circuit and normalize $\ket{\Psi}$ using \cref{eq:e,eq:i,eq:u}
	\item For each remaining path variable $y$ in $\ket{\Psi}$
	\begin{enumerate}
		\item If there exists a generalized permutation $U$ such that $y$ is reducible in $U^\dagger\ket{\Psi}$,
		\begin{enumerate}
			\item $\ket{\Psi} \gets (\hgate \otimes \igate_{n-1})U^\dagger\ket{\Psi}$
			\item Append $U(\hgate \otimes \igate_{n-1})$ to $C$
			\item Go to step $1$
		\end{enumerate}
	\end{enumerate}
	\item If path variables remain or $\Psi$ is non-unitary, fail. Otherwise, append $\Psi^\dagger$ to $C$ and return $C$.
\end{enumerate}
\end{algorithm}

Finding such a generalized permutation is highly non-trivial. Our method applies a series of symbolic simplifications, corresponding to $\Lambda_k(X)$ and $\Lambda_k(R_Z(\theta))$ gates, to the term $e^{2\pi i P(\vec{x}, \vec{y})}\ket{f(\vec{x}, \vec{y})}$. If these simplifications fail to leave any variable reducible, we fall back to an exponential-time procedure aimed at computing a substitution of the form in \cref{eq:v} which will make some variable reducible. These heuristics are described in \cref{app:genperm}.

\section{Experiments}
\label{sec:applications}

To test the performance and utility of our synthesis methods, we implemented \cref{alg:clifford,alg:unitary} in the \textsc{feynman}\footnote{Available at \href{https://github.com/meamy/feynman}{https://github.com/meamy/feynman}.} software package. In this section, we briefly detail our investigations into applications to the optimization and decompilation of circuits, and to specification-based synthesis. All Clifford circuits synthesized have been checked for correctness using the method of \cite{a18}. For \cref{alg:unitary}, as the method of \cite{a18} often fails to verify circuits extracted using \cref{eq:v}, we instead validated correctness of our synthesis procedure by verifying the individual synthesis steps each on 1000 unitary path sums extracted from randomly generated Clifford+$T$ circuits. \Cref{tab:results} gives some statistics from experiments re-synthesizing random circuits. Random circuits were generated by selecting a given number of gates on a given number of qubits, taken from the $\{\cxgate, \hgate, \pgate\}$ and $\{\cxgate, \hgate, \tgate\}$ gate sets for Clifford and Clifford+$T$, respectively.

\renewcommand{\arraystretch}{0.5}
\begin{table}[h]
\small
\centering
\begin{tabular}{p{10em}rrrrrr} \toprule
	& $n$ & \# gates & \# circuits & avg. time (s) & avg. change (+/-) & success \\ \midrule
Clifford
 & 20 & 500 & 1000 & 0.137 & {\color{red} +19.2\%} & \multicolumn{1}{c}{--} \\
 & 20 & 1000 & 1000 & 0.481 & {\color{green} -12.9\%} & \multicolumn{1}{c}{--} \\ \cmidrule{2-7}
 & 50 & 500 & 1000 & 0.264 & {\color{red} +90.7\%} & \multicolumn{1}{c}{--} \\
 & 50 & 1000 & 1000 & 1.518 & {\color{red} +129.1\%} & \multicolumn{1}{c}{--} \\ \midrule
Clifford+$T$
 & 20 & 100 & 1000 & 0.010 & {\color{red} +48.9\%} & 99.9\%  \\
 & 20 & 200 & 1000 & 0.045 & {\color{red} +93.7\%} & 94.9\% \\ 
 & 20 & 300 & 1000 & 0.097 & {\color{red} +115.9\%} & 74.7\% \\ \cmidrule{2-7}
 & 50 & 100 & 1000 & 0.016 & {\color{red} +33.5\%} & 100.0\%  \\
 & 50 & 200 & 1000 & 0.044& {\color{red} +49.0\%} & 100.0\%  \\
 & 50 & 300 & 1000 & 0.104& {\color{red} +79.4\%} & 99.6\%  \\ \bottomrule
\end{tabular}
\caption{Re-synthesis results for randomly generated circuits on $n$ qubits. Avg. change gives the average percent increase (+) or decrease (-) in the re-synthesized gate count compared to the original circuit. Success gives the percentage of circuits successfully re-synthesized.}
\label{tab:results}
\end{table}
\renewcommand{\arraystretch}{1}

\paragraph{Circuit optimization}

One of the key factors in phase folding optimizations \cite{amm14,nrscm17} is the placement of Hadamard gates. It was shown in \cite{am19} that the $T$-count in a Clifford+$T$ circuit can be upper bounded by $O(hn^2)$, where $h$ is the number of Hadamard layers in the circuit. As our Clifford synthesis algorithm produces circuits with just a single layer of Hadamard gates, it is natural to ask whether we can optimize $T$-count by reducing the number of Hadamard layers in Clifford+$T$ circuits.

We implemented a Clifford sub-circuit normalization method (the \texttt{-clifford} pass in \textsc{feynopt}) using \cref{alg:clifford} to re-synthesize simple greedily chosen Clifford sub-circuits. We tested the effect on $T$-count optimization by applying Clifford normalization together with phase folding and compared it against \cite{kw19} on the benchmark set of \cite{amm14}. In all but 4 benchmarks, the same $T$-count was achieved by normalizing greedily chosen Clifford sub-circuits and applying phase polynomial optimizations. In two of those cases, \texttt{qcla-com}$_{7}$ and \texttt{csla-mux}$_{3}$, our method produced lower $T$-count circuits --- $94$ (down from $95$) and $60$ (down from $62$), respectively. For the other two cases, \texttt{ham15-med} and \texttt{adder}$_8$, our method produced worse results --- $230$ (up from $212$) and $215$ (up from $173$), respectively.

More broadly, we might expect to be able to optimize a circuit by resynthesizing its simplified path sum using the general synthesis algorithm \cref{alg:unitary}. This is often effective when the path sum is simple, as in the resynthesized circuits corresponding to sub-circuits of the \texttt{adder}$_8$ benchmark below, but as the path sum becomes increasingly complex extraction typically performs worse than human designs. We leave it as an avenue for future work to make symbolic synthesis practical for circuit optimization, and in particular to develop effective peephole optimization procedures.

\begin{minipage}{0.43\textwidth}
\[
	\Qcircuit @C=.5em @R=.5em @!R {
		& \qw & \ctrl{2} & \qw & \ctrl{2} & \qw & \ctrl{2} & \qw & \qw  \\
		& \qw & \ctrl{1} & \targ & \ctrl{1} & \targ & \ctrl{1} & \qw & \qw  \\
		& \qw & \targ & \ctrl{-1}& \targ & \ctrl{-1} & \targ & \qw & \qw
	}
	\raisebox{-1.3em}{$\quad\longrightarrow\quad$}
	\Qcircuit @C=.5em @R=.5em @!R {
		& \qw & \ctrl{1} & \qw & \qw  \\
		& \qw & \targ & \qw & \qw  \\
		& \qw & \ctrl{-1} & \qw & \qw
	}
\]
\end{minipage}
\begin{minipage}{0.5\textwidth}
\[
	\Qcircuit @C=.5em @R=.5em @!R {
		& \qw & \ctrl{2} & \qw & \ctrl{2} & \qw & \ctrl{2} & \qw & \qw  \\
		& \qw & \ctrl{1} & \qw& \ctrl{1} & \qw & \ctrl{1} & \qw & \qw  \\
		& \qw & \targ & \ctrl{1}& \targ & \ctrl{1} & \targ & \qw & \qw \\
		& \qw & \qw & \targ & \qw & \targ & \qw & \qw & \qw
	}
	\raisebox{-2em}{$\quad\longrightarrow\quad$}
	\Qcircuit @C=.5em @R=.5em @!R {
		& \qw & \qw & \ctrl{2} & \qw & \qw & \qw  \\
		& \qw & \qw& \ctrl{1} & \qw & \qw & \qw  \\
		& \qw & \ctrl{1}& \targ & \ctrl{1} & \qw & \qw \\
		& \qw & \targ & \qw & \targ & \qw & \qw
	}
\]
\end{minipage}

\paragraph{Decompilation}

An interesting application of our symbolic synthesis algorithm is to the \emph{decompilation} of quantum circuits. Classically, decompilation is the process of translating a program in a low-level language to equivalent high-level source code, typically used for reverse engineering or recompilation. As the gate set targeted by \cref{alg:unitary} is quite high-level, in many cases it can be used to effectively decompile lower-level circuits. This decompilation can potentially help developers to examine the high-level structure of a low-level circuit, and also allow optimizations targeting higher level gate sets to be performed on circuits written over low-level gate sets, such as Clifford+$T$. Below we give some examples of standard circuits from the literature decompiled using \cref{alg:unitary}. The decompiler can be accessed with the \texttt{-decompile} option in \textsc{feynopt}.

\begin{minipage}[t]{0.5\textwidth}
\small
\[
	\Qcircuit @C=.5em @R=.2em {
		& \qw & \gate{T} & \ctrl{1} & \qw & \targ & \gate{T^\dagger} & \targ & \gate{T^\dagger} & \targ & \ctrl{1} & \qw & \qw  \\
		& \qw & \gate{T} & \targ & \ctrl{1} & \qw & \gate{T^\dagger} & \ctrl{-1} & \ctrl{1} & \qw & \targ & \qw & \qw  \\
		& \qw & \gate{H} & \gate{T} & \targ & \ctrl{-2} & \gate{T} & \qw & \targ & \ctrl{-2} & \gate{H} & \qw & \qw
	}
	\raisebox{-1.7em}{$\quad\longrightarrow\quad$}
	\Qcircuit @C=.5em @R=.8em @!R {
		& \qw & \ctrl{1} & \qw & \qw  \\
		& \qw & \ctrl{1} & \qw & \qw  \\
		& \qw & \targ & \qw & \qw
	}
\]
\end{minipage}
\begin{minipage}[t]{0.5\textwidth}
\small
\[
	\Qcircuit @C=.5em @R=.2em {
		& \qw & \gate{T} & \ctrl{1} & \qw & \ctrl{1} & \qw   \\
		& \qw & \gate{T} & \targ & \gate{T^\dagger} & \targ & \qw 
	}
	\raisebox{-1em}{$\quad\longrightarrow\quad$}
	\Qcircuit @C=.5em @R=.2em @!R {
		& \qw & \ctrl{1} & \qw & \qw  \\
		& \qw & \gate{S} & \qw & \qw
	}
\]
\end{minipage}

\begin{minipage}[t]{0.3\textwidth}
\[
	\Qcircuit @C=.5em @R=.2em @!R {
		& \ctrl{3} & \qw & \ctrl{3} & \qw & \qw \\
		& \ctrl{2} & \qw & \ctrl{2} & \qw & \qw \\
		& \qw & \ctrl{2} & \qw & \ctrl{2} & \qw \\
		& \targ & \ctrl{1} & \targ & \ctrl{1} & \qw \\
		& \qw & \targ & \qw & \targ & \qw
	}
	\raisebox{-2em}{$\quad\longrightarrow\quad$}
	\Qcircuit @C=.5em @R=.2em @!R {
		& \qw & \ctrl{4} & \qw & \qw  \\
		& \qw & \ctrl{3} & \qw & \qw  \\
		& \qw & \ctrl{2} & \qw & \qw  \\
		& \qw & \qw & \qw & \qw  \\
		& \qw & \targ & \qw & \qw
	}
\]
\end{minipage}
\begin{minipage}[t]{0.7\textwidth}
\small
\[
	\Qcircuit @C=.5em @R=.05em @!R {
		& \qw & \qw & \qw & \qw & \ctrl{3} & \qw & \qw & \qw & \qw & \qw \\
		& \qw & \qw & \qw & \qw & \ctrl{2} & \qw & \qw & \qw & \qw & \qw \\
		& \qw & \qw & \ctrl{1} & \qw & \qw & \qw & \ctrl{1} & \qw & \qw & \qw \\
		& \gate{H} & \gate{T} & \targ & \gate{T^\dagger} & \gate{iX} & \gate{T} & \targ & \gate{T^\dagger} & \gate{H} & \qw
	}
	\raisebox{-2.3em}{$\quad\longrightarrow\quad$}
	\Qcircuit @C=.5em @R=.1em @!R {
		& \qw & \ctrl{3} & \ctrl{3} & \ctrl{1} & \ctrl{3} & \qw & \qw  \\
		& \qw & \ctrl{2} & \ctrl{2} & \gate{S^\dagger} & \ctrl{2} & \qw & \qw  \\
		& \qw & \ctrl{1} & \qw & \qw & \ctrl{1} & \qw & \qw  \\
		& \qw & \ctrl{0} & \ctrl{0} & \gate{S^\dagger} & \targ & \gate{S} & \qw
	}
\]
\end{minipage}
The bottom right circuit above is a relative phase Toffoli gate implementation taken from \cite{ar21}. The utility of decompilation is apparant here, as both the fact that it implements a Toffoli up to phase and the exact form of the relative phase can be readily observed from the decompiled circuit.

\paragraph{Specification-based synthesis}

In \cite{a18} it was noted that path sums offer a convenient form of logical specification for many quantum computations, being very close to the ``textbook'' specification. \Cref{alg:unitary} gives a method of synthesizing a circuit directly from such a specification. Such specifications include not only classical reversible functions such as $\ket{x_1x_2x_3} \mapsto \ket{x_1x_2(x_3\oplus x_1x_2)}$, which can be synthesized by existing reversible circuit synthesis methods, but also classical functions ``in the phase,'' up to relative phases, or inside superpositions. We illustrate this by using \cref{alg:unitary} to synthesize the quantum Fourier transform.

Recall that the $n$-qubit quantum fourier transform can be expressed as $QFT_n\ket{\vec{x}} = \frac{1}{\sqrt{2^n}}\sum_{\vec{y}\in\Z_2^n}\omega_{2^n}^{\vec{x}\vec{y}}\ket{\vec{y}}
$
where $\vec{x}\vec{y}$ is the integer product of $\vec{x}$ and $\vec{y}$. 
In the $3$ qubit case, expanding the integer multiplication to a multilinear polynomial we have
\[
	QFT_3\ket{x_1x_2x_3} = \frac{1}{\sqrt{2^3}}\sum_{y_1,y_2,y_3}\omega^{x_3y_3}i^{x_3y_2 + x_2y_3}(-1)^{x_3y_1 + x_2y_2 + x_1y_3}\ket{y_1y_2y_3}
\]
where $y_1$ is reducible, and in particular
\[
	(\hgate\otimes \igate_2)QFT_3\ket{x_1x_2x_3} = \frac{1}{\sqrt{2^2}}\sum_{y_2,y_3}\omega^{x_3y_3}i^{x_3y_2 + x_2y_3}(-1)^{x_2y_2 + x_1y_3}\ket{x_3y_2y_3}.
\]
While neither of $y_2$ or $y_3$ are reducible above, the $\omega^{x_3y_3}$ and $i^{x_3y_2}$ terms can be eliminated by applying controlled-$\tgate^\dagger$ and -$\pgate^\dagger$ gates, respectively, leaving $y_2$ reducible:
\[
	(\Lambda(\pgate^\dagger)\otimes \igate)(\swapgate\otimes \igate)(\igate\otimes \Lambda(\tgate^\dagger))(\swapgate\otimes \igate)(\hgate\otimes \igate_2)QFT_3\ket{x_1x_2x_3} = \frac{1}{\sqrt{2^2}}\sum_{y_2,y_3}i^{x_2y_3}(-1)^{x_2y_2 + x_1y_3}\ket{x_3y_2y_3}.
\]
After eliminating $y_2$, the process repeats for $y_1$, leaving a final permutation to be synthesized.

A $5$ qubit $QFT$ circuit synthesized with our implementation is shown verbatim below, where $R_k:=R_Z(1/2^k)$. We were able to synthesize instances on up to 50 qubits in just seconds on a desktop computer.

{\small
\[
	\Qcircuit @C=.5em @R=.05em @!R {
		& \qw & & \link{4}{-2} & \qw & \qw & \qw & \qw & \qw & \qw & \qw & \qw & \qw & \qw & \qw & \qw & \qw & \ctrl{4} & \ctrl{3} & \ctrl{2} & \ctrl{1} & \gate{H} & \qw \\
		& \qw & \qw & \qw & \link{2}{-1} & \qw & \qw & \qw & \qw & \qw & \qw & \qw & \qw & \ctrl{3} & \ctrl{2} & \ctrl{1} & \gate{H} & \qw & \qw & \qw & \measure{R_2} & \qw & \qw \\
		& \qw & \qw & \qw & \qw & \qw & \qw & \qw & \qw & \qw & \ctrl{2} & \ctrl{1} & \gate{H} & \qw & \qw & \measure{R_2} & \qw & \qw & \qw & \measure{R_3} & \qw & \qw & \qw \\
		& \qw & \qw & \qw & \link{-2}{-1} & \qw & \qw & \qw & \ctrl{1} & \gate{H} & \qw & \measure{R_2} & \qw & \qw & \measure{R_3} & \qw & \qw & \qw & \measure{R_4} & \qw & \qw & \qw & \qw \\
		& \qw & & \link{-4}{-2} & \qw & \qw & \qw & \gate{H} & \measure{R_2} & \qw & \measure{R_3} & \qw & \qw & \measure{R_4} & \qw & \qw & \qw & \measure{R_5} & \qw & \qw & \qw & \qw & \qw 
	}
\]
}

\section{Conclusion}
\label{sec:conclusion}

In this paper we looked at the problem of synthesis of unitary quantum circuits from symbolic expressions as sums-over-paths. We showed that we cannot hope to efficiently synthesize a circuit from a general path sum efficiently, as the problem of checking whether there the path sum represents a unitary transformation is itself \textbf{co-NP}-hard. A stronger result was given recently for the extraction of ZX-diagrams \cite{bkw22}, though their work did not address the complexity of the potentially easier problem of unitarity testing. The problem of unitarity testing for ZX-diagrams is likewise believed to be intractable \cite{w21}.

For the restricted case of Clifford operations, we showed that a circuit can be synthesized efficiently in the form $C_1HC_2$ for Hadamard-free Clifford circuits $C_1$ and $C_2$. For more general path sums we gave a heuristic based on symbolic manipulation and simplification of the sum. We experimentally validated our method, showing that most path sums corresponding to unitary transformations can in fact be synthesized. Moreover, our algorithm is capable of producing natural, high-level circuit designs for some path sums, including the quantum Fourier transform. It remains as a course of future work however to develop a complete synthesis algorithm, as well as to reduce the cost of synthesized circuits.

\bibliographystyle{eptcs} 
\bibliography{extraction}

\appendix

\section{Finding generalized permutations}\label{app:genperm}

In this appendix we detail our method for finding a generalized permutation in step 2.(a) of \cref{alg:unitary}.

Compared to Clifford operators, simplification via \cref{prop:eqns} may not always leave the path sum in a reducible state. For instance, the path sum expression below, corresponding to a unitary transformation, is fully reduced with respect to \cref{eq:e,eq:i,eq:u}:
\[
	\frac{1}{\sqrt{2^2}}\sum_{y_1,y_2}i^{x_2y_1 - x_2y_2}(-1)^{x_1y_1 + x_1y_2 + x_2y_1y_2}\ket{y_1}\ket{y_2}
\]
However, neither $y_1$ nor $y_2$ are reducible due to the quadratic terms $x_2y_1$ and $x_2y_2$ in the exponent of $i$. At the moment, it is unclear how to proceed symbolically to find a generalized permutation that will make either path variable reducible in the above expression.

Our heuristic method of producing a generalized permutation proceeds in increasingly costly circuit stages in an attempt to synthesize as efficient circuits as possible. Rather than attempt to synthesize a distinct generalized permutation for every path variable $y$ as described in \cref{alg:unitary}, we first apply a sequence of simplification stages generically to both reduce the redundant synthesis work, and produce simpler circuits in practice. The sequence of stages is given in \cref{alg:genperm}, and the individual synthesis steps are described in detail below.

\begin{algorithm}
\caption{Generalized permutation synthesis heuristic}
\label{alg:genperm}
\begin{enumerate}
	\item Apply affine simplifications to the output state $\ket{f(\vec{x},\vec{y})}$
	\item Apply non-linear simplifications to the phase $e^{2\pi iP(\vec{x},\vec{y})}$
	\item Apply non-linear simplifications to the output state $\ket{f'(\vec{x},\vec{y})}$
	\item Apply non-linear simplifications to the phase $e^{2\pi iP'(\vec{x},\vec{y})}$
	\item If no path variable is reducible, attempt degree reduction on each variable
\end{enumerate}
\end{algorithm}

\paragraph{Affine simplifications}

As $X$ and $CNOT$ gates are relatively inexpensive, the first stage of our generalize permutation synthesis attempts to simplify the output $\ket{f(\vec{x},\vec{y})}$ of the path sum as much as possible using only these affine transformations. In order to reduce the number of high-degree terms, which would otherwise require expensive multiply-controlled Toffoli gates, we perform affine simplifications on a \emph{linearization} of $f$. Specifically, we write each $f_i(\vec{x}, \vec{y})$ as a sparse vector $\vec{u_i}\in\Z_2^{2^{n+k}}$ using reverse lexociographic order for the encoding of monomials, then set $A = \begin{bmatrix} u_1 \; u_2 \; \dots \; u_n \end{bmatrix}^T$ and use Gaussian elimination to compute a sequence of $CNOT$ gates reducing $A$ to echelon form. The example below illustrates our method.

\begin{example}
Consider the path sum $\ket{x_1}\ket{x_2}\ket{x_3\oplus x_1x_2}\ket{x_4\oplus x_1x_2}$. This could naturally by synthesized using two non-linear Toffolis to eliminate the $x_1x_2$ terms from the third and forth qubits. The resulting circuit is shown below:
\[
	\Qcircuit @C=.5em @R=.5em @!R {
		\lstick{x_1} & \ctrl{2} & \ctrl{3} & \rstick{x_1} \qw  \\
		\lstick{x_2} & \ctrl{1} & \ctrl{2} & \rstick{x_2} \qw  \\
		\lstick{x_3} & \targ & \qw & \rstick{x_3\oplus x_1x_2} \qw  \\
		\lstick{x_4} & \qw & \targ & \rstick{x_4\oplus x_1x_2} \qw  \\
	}
\]
Alternatively, we can write the output as a (sparse) linear system over all monomials in $x_1,x_2,x_3,x_4$ as shown below:
\[
	\begin{matrix}
		& x_1x_2 & x_4 & x_3 & x_2 & x_1 \\ \hline
		x_1& 0 & 0 & 0 & 0 & 1 \\
		x_2 & 0 & 0 & 0 & 1 & 0 \\
		x_3\oplus x_1x_2 & 1 & 0 & 1 & 0 & 0 \\
		x_4\oplus x_1x_2 & 1 & 1 & 0 & 0 & 0
	\end{matrix}
\]
We use reverse lexicographic order so that reduction to echelon form will prioritize the number of high degree terms. Reducing this to echelon form results in a single $CNOT$ gate and reduces the state to $\ket{x_1}\ket{x_2}\ket{x_3\oplus x_1x_2}\ket{x_4\oplus x_3}$. Synthesizing this remaining transformation gives the overall circuit
\[
	\Qcircuit @C=.5em @R=.5em @!R {
		\lstick{x_1} & \qw & \ctrl{2} & \qw & \rstick{x_1} \qw  \\
		\lstick{x_2} & \qw & \ctrl{1} & \qw & \rstick{x_2} \qw  \\
		\lstick{x_3} & \ctrl{1} & \targ & \ctrl{1} & \rstick{x_3\oplus x_1x_2} \qw  \\
		\lstick{x_4} & \targ & \qw & \targ & \rstick{x_4\oplus x_1x_2} \qw  \\
	}
\]
\end{example}

\paragraph{Phase simplifications}

To reduce and simplify the number of terms in the phase $e^{2\pi iP(\vec{x},\vec{y})}$ of a path sum controlled $R_Z$ gates with continuous parameters are used. In particular, given an $n$-dimensional path sum $e^{2\pi i \theta\prod_i x_i}\ket{\vec{x}}$, we can reduce the phase term by applying a $\Lambda_n(R_Z(-\theta))$ gate, since
\[
	\Lambda_n(R_Z(-\theta)):e^{2\pi i \theta\prod_i x_i}\ket{\vec{x}}\mapsto \ket{\vec{x}}.
\]
As the output of the path sum is in some state $\ket{f(\vec{x},\vec{y})}$, to apply the above rule we first need to apply a \emph{change of frame} by setting $f_i(\vec{x},\vec{y}) = z_i$ and writing the phase polynomial $P(\vec{x},\vec{y})$ as $P'(\vec{x},\vec{y},\vec{z})$. This is achieved by, for each $f_i$, letting $l$ be the \emph{largest} (non-zero) degree term of $f_i$ and substituting $l\gets z_i \oplus l \oplus f_i(\vec{x},\vec{y})$ in the path sum.

\begin{example}
	Consider the irreducible path sum $\frac{1}{\sqrt{2}}\sum_{y_1}\omega^{-x_1}i^{x_1y_1 - x_2y_1}(-1)^{x_1x_2y_1}\ket{x_1\oplus x_2}\ket{y_1}$. Substituting $[x_2 \gets z_1 \oplus x_1$, $y_1 \gets z_2]$ gives the re-framed path sum
	\[
		\omega^{-x_1}i^{-z_1z_2}(-1)^{x_1z_2}\ket{z_1}\ket{z_2}.
	\]
	Applying a controlled $S$ gate to eliminate the term $i^{-z_1z_2}$ and rolling back the substitutions gives
	\[
		\omega^{-x_1}(-1)^{x_1y_1}\ket{x_1\oplus x_2}\ket{y_1}.
	\]
	The variable $y_1$ is now reducible, so we can finish synthesis by applying a Hadamard to the second qubit, then synthesizing the final generalized permutation $\ket{x_1x_2} \mapsto \omega^{-x_1}\ket{x_1\oplus x_2}\ket{x_1}$. The resulting circuit is given below:
\[
	\Qcircuit @C=.5em @R=.5em @!R {
		\lstick{x_1} & \qw & \link{1}{-1} & \qw & \targ & \qw & \ctrl{1} & \rstick{x_1\oplus x_2} \qw  \\
		\lstick{x_2} & \qw & \link{-1}{-1} & \gate{T^\dagger} & \ctrl{-1} & \gate{H} & \gate{S^\dagger} & \rstick{y_1} \qw  \\
	}
\]
\end{example}

In our implementation, we apply phase simplifications both before and after non-linear simplifications in the state. This is so that we can effectively utilize high degree terms in the state to simplify high degree terms in the phase with phase gates on fewer qubits. The following example illustrates this effect.

\begin{example}
	Consider the path sum $\omega^{x_3 + x_1x_2}i^{-x_1x_2x_3}\ket{x_1}\ket{x_2}\ket{x_3\oplus x_1x_2}$. Eliminating the $x_1x_2$ term in qubit $3$ before simplifying the phase results in the following circuit:
\[
	\Qcircuit @C=.5em @R=.5em @!R {
		\lstick{x_1} & \qw & \ctrl{1} & \ctrl{2} & \ctrl{2}  & \rstick{x_1} \qw  \\
		\lstick{x_2} & \qw & \gate{T} & \ctrl{1} & \ctrl{1}  & \rstick{x_2} \qw  \\
		\lstick{x_3} & \qw & \gate{T} & \gate{S^\dagger} & \targ  & \rstick{x_3\oplus x_1x_2} \qw  \\
	}
\]
However, by re-framing the sum with the substitution $[x_1\gets z_1, x_2\gets z_2, z_1z_2 \gets z_3 \oplus x_3]$ we find
\[
	\omega^{x_3 + x_1x_2}i^{-x_1x_2x_3}\ket{x_1}\ket{x_2}\ket{x_3\oplus x_1x_2} \equiv \omega^{z_3}\ket{z_1}\ket{z_2}\ket{z_3}
\]
which can now be simplified with a single $T$ gate. Note that the substitution is applied left to right, rather than as a simultaneous substitution. The resulting circuit is shown below:
\[
	\Qcircuit @C=.5em @R=.5em @!R {
		\lstick{x_1} & \qw & \ctrl{2} & \qw & \rstick{x_1} \qw  \\
		\lstick{x_2} & \qw & \ctrl{1} & \qw & \rstick{x_2} \qw  \\
		\lstick{x_3} & \qw & \targ & \gate{T} & \rstick{x_3\oplus x_1x_2} \qw  \\
	}
\]
The final substitution $z_1z_2 \gets z_3 \oplus x_3$ may seem counter-intuitive, as we could instead substitute $x_3 \gets z_3 \oplus z_1z_2$. We choose a monomial of maximal degree to substitute in order to avoid inadvertently increasing the degreee of the phase polynomial. For instance, if the initial path sum was instead $\omega^{x_3}\ket{x_1}\ket{x_2}\ket{x_3\oplus x_1x_2}$, substituting $x_3 \gets z_3 \oplus z_1z_2$ results in a re-framed sum of $\omega^{z_3 + z_1z_2}i^{-z_1z_2z_3}\ket{z_1}\ket{z_2}\ket{z_3}$ and the final circuit
\[
	\Qcircuit @C=.5em @R=.5em @!R {
		\lstick{x_1} & \qw & \ctrl{2} & \ctrl{1} & \ctrl{2} & \rstick{x_1} \qw  \\
		\lstick{x_2} & \qw & \ctrl{1} & \gate{T} & \ctrl{1} & \rstick{x_2} \qw  \\
		\lstick{x_3} & \qw & \targ & \gate{T} & \gate{S^\dagger}  & \rstick{x_3\oplus x_1x_2} \qw  \\
	}
\]
By substituting the highest degree monomial instead, we avoid this issue and synthesize the simpler circuit placing the $T$ gate to the left of the Toffoli.
\end{example}

\paragraph{Non-linear simplifications}

The non-linear simplification step of our synthesis algorithm reduces the number of non-linear terms in the output $\ket{f(\vec{x},\vec{y})}$ by applying multiply-controlled Toffoli gates $\Lambda_k(X)$ via the rule
\[
	\Lambda_k(X):\ket{\vec{x}}\ket{f \oplus \prod_i x_i} \mapsto \ket{\vec{x}}\ket{f}.
\]
Our method for non-linear simplifications uses a na\"{i}ve heuristic whereby a set of variables 
\[
	V= \{v \mid f_i(\vec{x},\vec{y}) = v \text{ for some } i\}
\] 
is computed. Any term in $f(\vec{x},\vec{y})$ which is a product of variables contained in $V$ is then eliminated with an appropriately controlled Toffoli gate.

This method is far from optimal, and in particular misses cases which can be factorized as a cascade of Toffoli gates. While better reversible synthesis methods exist, the lack of a known permutation to synthesize \textit{a priori} in our case makes it difficult to apply such methods directly. An interesting avenue for future work would be to re-synthesize the permutation discovered through this process of simplification using state-of-the-art methods.

\paragraph{Degree reduction}

In many cases, the simplifications previously described fail to leave some path variable in a reducible position. When this happens, our last resort is to fall back to an exponential time procedure we call \emph{degree reduction}. The idea of is to reduce the degree of the (non-Boolean) parts of a phase polynomial restricted to a particular variable, as these terms serve as roadblocks for reduction. This can in some cases be accomplished by applying variable substitutions in such a way as to cancel out terms involving a particular variable.

To illustrate degree reduction, recall the irreducible path sum expression from the beginning of this section,
\[
	\frac{1}{\sqrt{2^2}}\sum_{y_1,y_2}i^{x_2y_1 - x_2y_2}(-1)^{x_1y_1 + x_1y_2 + x_2y_1y_2}\ket{y_1}\ket{y_2}
\]
The exponent of $i$ cannot be directly reduced via simple phase simplifications, as both terms depend on $x_2$. However, we can \emph{indirectly} eliminate one of these terms by applying \cref{eq:v}, substituting $y_1$ with $y_1\oplus y_2$:
\begin{align*}
	\frac{1}{\sqrt{2^2}}&\sum_{y_1,y_2}i^{x_2y_1 - x_2y_2}(-1)^{x_1y_1 + x_1y_2 + x_2y_1y_2}\ket{y_1}\ket{y_2} \\
		&= \frac{1}{\sqrt{2^2}}\sum_{y_1,y_2}i^{x_2(y_1 \oplus y_2) - x_2y_2}(-1)^{x_1(y_1 \oplus y_2) + x_1y_2 + x_2(y_1 \oplus y_2)y_2}\ket{y_1 \oplus y_2}\ket{y_2} && \text{ by \cref{eq:v}} \\
		&= \frac{1}{\sqrt{2^2}}\sum_{y_1,y_2}i^{x_2y_1}(-1)^{x_1y_1 + x_2y_2}\ket{y_1 \oplus y_2}\ket{y_2}
\end{align*}
The final expression above can be simplified to $\frac{1}{\sqrt{2^2}}\sum_{y_1,y_2}i^{x_2y_1}(-1)^{x_1y_1 + x_2y_2}\ket{y_1}\ket{y_2}$ by applying a $CNOT$ gate, which leaves $y_2$ in a reducible position. The resulting circuit is given below:
\[
	\Qcircuit @C=.5em @R=.5em @!R {
		\lstick{x_1} & \qw & \gate{H} & \ctrl{1} & \qw & \targ & \rstick{y_1\oplus y_2} \qw  \\
		\lstick{x_2} & \qw & \qw & \gate{S} & \gate{H} & \ctrl{-1} & \rstick{y_2} \qw  \\
	}
\]

The above example is relatively easy to spot, but more complicated cases may require substitution of multiple variables, or even non-linear substitutions. Our heuristic revolves around computing a type of \emph{cover} for the quotient $2(P/y)$, where $y$ is the candidate for degree reduction.

\begin{lemma}\label{lem:poly}
Let $P\in\mathbb{R}[y,x_1,\dots, x_n]$ such that $4(P/y)\equiv 0 \mod 2$. If there exists a set $S\subseteq \{1,\dots,n\}$ such that $4(P/x_i) \equiv 0 \mod 2$ for all $i\in S$ and
\[
	\sum_{i\in S} 2(P/x_i) \equiv 2(P/y) \mod 2
\]
then $2(P[x_i \gets \overline{x_i \oplus y} \mid i\in S]/y) \equiv 0 \mod 2$
\end{lemma}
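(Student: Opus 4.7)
The plan is to analyze the substitution $x_i \gets \overline{x_i \oplus y}$ via its Boolean behavior in $y$, since the conclusion concerns only the coefficient of $y$ in the resulting polynomial. Writing $P = yA + B$ with $A = P/y$ and $B = P|_{y=0}$, and letting $\tilde{\vec{x}}$ denote $\vec{x}$ with $x_i$ complemented for $i \in S$, the lift $\overline{x_i \oplus y} = x_i + y - 2x_iy$ reduces at the Boolean values of $y$ to the identity on one side and to the flip $x_i \mapsto 1-x_i$ on the other. Hence the multilinear coefficient of $y$ in $P'$ is the discrete derivative
\[
  P'/y \;=\; P'|_{y=1} - P'|_{y=0} \;=\; \bigl(A(\tilde{\vec{x}}) + B(\tilde{\vec{x}})\bigr) - B(\vec{x})
\]
(up to the trivial relabeling if the roles of $y=0$ and $y=1$ are swapped). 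The goal reduces to showing this expression lies in $\mathbb{Z}[\vec{x}]$.

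The second step uses the hypotheses to control the right-hand side modulo integer polynomials. For a single index, one has the elementary flip identity $B[x_i \gets 1-x_i] - B = (1-2x_i)(B/x_i)$, and since condition (b) forces $B/x_i$ to have half-integer coefficients, the term $2x_i(B/x_i)$ is an integer polynomial and so $B[x_i \gets 1-x_i] - B \equiv B/x_i \pmod{\mathbb{Z}[\vec{x}]}$. I would then iterate this over $i \in S$, tracking the cross-terms, to establish the key intermediate identity
\[
  B(\tilde{\vec{x}}) - B(\vec{x}) \;\equiv\; \sum_{i \in S} B/x_i \pmod{\mathbb{Z}[\vec{x}]}.
\]
Condition (c), expanded using $P/x_i = y(A/x_i) + B/x_i$ and comparing the $y^0$ and $y^1$ parts, yields $\sum_{i \in S} B/x_i \equiv A \pmod{\mathbb{Z}[\vec{x}]}$ and $\sum_{i \in S} A/x_i \in \mathbb{Z}[\vec{x}]$. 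Substituting the first into the intermediate identity cancels $B(\tilde{\vec{x}}) - B$ against $A$ modulo integers, leaving $A(\tilde{\vec{x}})$; applying the same flip identity to $A$ and using the second consequence of (c) shows $A(\tilde{\vec{x}}) \equiv A + \sum_{i \in S} A/x_i \equiv A \pmod{\mathbb{Z}[\vec{x}]}$, but $A$ is then cancelled by the leftover integer corrections introduced in the $B$-step.

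The main obstacle is the multi-variable flip identity above. The single-flip case is clean, but composing flips generates cross-terms of the form $(1-2x_i)(1-2x_j)\bigl(B/(x_ix_j)\bigr)$ and higher, and it is not immediate that these vanish modulo $\mathbb{Z}[\vec{x}]$ given only the first-order half-integrality in (b). I would handle this either by induction on $|S|$, showing at each step that the hypotheses propagate to the relevant mixed partials of the partially flipped polynomial, or more directly by expanding each monomial $b_T x^T$ of $B$ as $b_T \prod_{j \in T\cap S}(1-x_j)\prod_{j \in T\setminus S} x_j$ via inclusion–exclusion and grouping contributions by the parity of the expanding subset, so that only first-order residues survive modulo $\mathbb{Z}[\vec{x}]$. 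If this bookkeeping cannot be closed with (a)--(c) alone, that signals an implicit stronger hypothesis (for instance, that the half-integer parts of $P$ are supported only in total degree $\le 1$ in the $x_i$'s with $i \in S$, which the surrounding heuristic seems to arrange in practice), and I would look to strengthen the statement accordingly.
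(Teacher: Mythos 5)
Your closing suspicion is correct, and the obstacle you identify is not a defect of your bookkeeping: the cross-terms are a genuine gap, and the same gap is present in the paper's own proof. The paper argues by asserting the identity
\[
2P[x_i \gets \overline{x_i \oplus y} \mid i\in S] \;=\; 2P + \sum_{i\in S} 2y(P/x_i) + \sum_{i\in S}4x_iy(P/x_i)
\]
and reading off the coefficient of $y$, using the hypothesis $4(P/x_i)\equiv 0$ to kill the last sum and the cover condition to cancel the middle sum against $2(P/y)$; this is exactly your computation, carried out on coefficients rather than on evaluations at $y\in\{0,1\}$. But the displayed identity is exact only when no monomial of $P$ contains two or more variables indexed by $S$; otherwise the omitted terms are precisely the mixed quotients $(1-2x_i)(1-2x_j)\,(P/(x_ix_j))$ (and higher) that you could not eliminate, and the stated hypotheses constrain only first-order quotients. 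Concretely, take $P = \tfrac12 x_1x_2 + \tfrac12 y(x_1+x_2)$ with $S=\{1,2\}$: then $4(P/y)=2x_1+2x_2$, $4(P/x_1)=2x_2+2y$, $4(P/x_2)=2x_1+2y$, and $\sum_i 2(P/x_i)-2(P/y)=2y$ are all $\equiv 0 \bmod 2$, yet the substitution yields $P' = \tfrac12 x_1x_2 + y\left(\tfrac32 - x_1 - x_2\right)$, so $2(P'/y) = 3-2x_1-2x_2$, which is odd at the origin. The lemma as stated therefore fails for $|S|\geq 2$, and no amount of bookkeeping can close your argument without an additional assumption.

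Two repairs are available, and your draft essentially anticipates both. Either add the hypothesis that $2(P/(x_ix_j))\equiv 0 \bmod 2$ for all distinct $i,j\in S$ (equivalently, the half-integral part of $P$ has degree at most one in the variables indexed by $S$); then the cross-terms are integral, your multi-flip identity $B(\tilde{\vec x}) - B(\vec x) \equiv \sum_{i\in S} B/x_i \pmod{\Z[\vec x]}$ holds, and your argument closes exactly as planned. Or weaken the conclusion to $4(P'/y)\equiv 0\bmod 2$: hypothesis (b) makes each mixed quotient half-integral, so the cross-terms contribute only integers to $2(P'/y)$, which doubling removes; this weaker conclusion is what reducibility of $y$ actually requires in the application, so it is likely the intended statement. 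For $|S|=1$ there are no cross-terms, and your single-flip computation and the paper's coefficient computation coincide and are both correct.
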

\begin{proof}
First recall that $\overline{x \oplus y} = x + y - 2xy$. Hence
\[
	2P[x_i \gets \overline{x_i \oplus y} \mid i\in S] = 2P + \sum_{i\in S} 2y(P/x_i) + \sum_{i\in S}4x_iy(P/x_i)
\]
Taking the quotient by $y$ we see
\begin{align*}
	2(P[x_i \gets \overline{x_i \oplus y} \mid i\in S]/y) &= 2(P/y) + \sum_{i\in S} 2(P/x_i) + \sum_{i\in S}4x_i(P/x_i) \\
		&\equiv 2(P/y) + 2(P/y) \mod 2 \\
		&\equiv 0 \mod 2
\end{align*}
\end{proof}

In the context of path sums, \cref{lem:poly} tells us that if $e^{2\pi iP(\vec{x}, \vec{y})}$ can be written as $i^{y_iQ(\vec{x}, \vec{y})}e^{2\pi iR(\vec{x}, \vec{y})}$ for some $i$, and there exists a subset of path variables $\{y_j \mid j \neq i\}$ such that $e^{2\pi iP(\vec{x}, \vec{y})} = i^{y_jQ_j(\vec{x}, \vec{y})}e^{2\pi iR_j(\vec{x}, \vec{y})}$ and $i^{\sum_j Q_j(\vec{x}, \vec{y})} = i^{Q(\vec{x}, \vec{y})}$, then the simultaneous substitution $y_j \gets y_j \oplus y_i$ will eliminate the term $i^{y_iQ(\vec{x}, \vec{y})}$. Additional simplifications in the state may then be further required to leave the path sum in a reducible state, as in the previous above.

\end{document}